\documentclass[12pt]{article}
\usepackage{amsmath,amssymb,amsthm}
\usepackage{graphicx,psfrag,epsf}
\usepackage{enumerate}
\usepackage[numbers]{natbib}
\usepackage{url} 
\usepackage{lipsum}
\usepackage{braket,amsfonts,amsopn} 

\usepackage[capitalise]{cleveref}
\crefformat{figure}{#2Figure~#1#3}
\DeclareMathOperator{\E}{E} 
\DeclareMathOperator{\Cov}{Cov} 
\DeclareMathOperator{\argmax}{argmax} 
\DeclareMathOperator{\diag}{diag} 
\DeclareMathOperator{\Ra}{Ra} 
\DeclareMathOperator{\Ker}{Ker} 
\DeclareMathOperator{\spans}{span} 
\DeclareMathOperator{\dist}{dist} 
\DeclareMathOperator{\Nprmse}{Normalized~PRMSE}

\DeclareMathOperator{\N}{N}

\usepackage{multirow,color}


\newtheorem{proposition}{Proposition}

\addtolength{\oddsidemargin}{-.5in}%
\addtolength{\evensidemargin}{-.5in}%
\addtolength{\textwidth}{1in}%
\addtolength{\textheight}{1.3in}%
\addtolength{\topmargin}{-.8in}%

\begin{document}

\def\spacingset#1{\renewcommand{\baselinestretch}%
{#1}\small\normalsize} \spacingset{1}

\title{\bf Dimension Reduction for Gaussian Process Emulation: An Application to the Influence of Bathymetry on Tsunami Heights}
\author{Xiaoyu Liu and Serge Guillas\\
  Department of Statistical Science, University College London}
\maketitle

\bigskip
\begin{abstract}
High accuracy complex computer models, also called simulators, require large resources in time and memory to produce realistic results. Statistical emulators are computationally cheap approximations of such simulators. They can be built to replace simulators for various purposes, such as the propagation of uncertainties from inputs to outputs or the calibration of some internal parameters against observations. However, when the input space is of high dimension, the construction of an emulator can become prohibitively expensive. In this paper, we introduce a joint framework merging emulation with dimension reduction in order to overcome this hurdle. The gradient-based kernel dimension reduction technique is chosen due to its ability to drastically decrease dimensionality with little loss in information. The Gaussian process emulation technique is combined with this dimension reduction approach. Theoretical properties of the approximation are explored. Our proposed approach provides an answer to the dimension reduction issue in emulation for a wide range of simulation problems that cannot be tackled using existing methods. The efficiency and accuracy of the proposed framework is demonstrated theoretically, and compared with other methods on an elliptic partial differential equation (PDE) problem. We finally present a realistic application to tsunami modeling. The uncertainties in the bathymetry (seafloor elevation) are modeled as high-dimensional realizations of a spatial process using a geostatistical approach. Our dimension-reduced emulation enables us to compute the impact of these uncertainties on resulting possible tsunami wave heights near-shore and on-shore. Considering an uncertain earthquake source, we observe a significant increase in the spread of uncertainties in the tsunami heights due to the contribution of the bathymetry uncertainties to the overall uncertainty budget. These results highlight the need to include the effect of uncertainties in the bathymetry in tsunami early warnings and risk assessments.
\end{abstract}

\noindent%
{\it Keywords:} Dimension reduction, Gaussian process, statistical emulation, uncertainty quantification
\vfill

\section{Introduction}
\label{sec:intro}
Simulators are widely employed to reproduce physical processes and explore their behavior, in fields such as fluid dynamics or climate modeling. To characterize the impact of the uncertainties in the boundary conditions or the parameterizations 
of the underlying physical processes, a sufficient number of simulations are required. However, when the simulators are computationally expensive, as it is the case for high accuracy simulations, the task can become extremely costly or even prohibitive. One prevailing way to overcome this hurdle is to construct statistical surrogates, namely emulators, to approximate the computer simulators in a probabilistic way \cite{sacks1989design}. Emulators are trained on a relatively small number of well chosen simulations, i.e. a design of computer experiments.  Outputs at any input can be predicted at little computational cost with emulators. One can then employ emulators for any subsequent purposes such as uncertainty propagation, sensitivity analysis and calibration.

With high-dimensional inputs, say beyond 20 dimensions (usually in the hundreds or thousands), a large design is usually required to explore the input space, typically in the order of 10 times the number of dimensions, for a reasonable level of approximation. One would face serious computational problems since the original simulator cannot be run many times or is very expensive to run. Advanced designs such as Latin Hypercubes or new sequential designs \cite{beck2016sequential} that are more efficient than Latin Hypercubes only partially alleviate the issue. As a result, methods that adequately reduce the dimension of the input space are required, as high-dimensional inputs are often present in computer models, e.g. as boundary conditions like the bathymetry (i.e. seafloor elevation) in tsunami modeling. Some approaches ignore high-dimensional inputs and add stochastic terms to account for their contribution \cite{iooss2009global,marrel2012global}. These methods are easy to implement and effective in some applications. However, repeated simulations at the 
same input parameters that are encoded in the emulation are often required to estimate the variability due to those parameters that are ignored. The variability estimates are often restricted to the second moments, and the input-output relationships over the ignored inputs are not clear. Constantine et al. \cite{constantine2014active} proposed to find rotations of the input space with the strongest variability in the gradients of the simulators and constructed a response surface on such a low-dimensional active subspace. This Active Subspace (AS) approach has been demonstrated to be effective theoretically and numerically. Constantine and Gleich \cite{constantine2015computing} studied further the properties of the Monte Carlo approximation of the subspace. However, this method requires the calculation of a sufficient number of gradients explicitly, which unfortunately prevents its use in many applications. The gradients are often unavailable in many realistic simulators, and typically intractable for systems of mixed PDEs or multi-physics simulations. Even in the rare situations where gradients are computable numerically, the computational cost of obtaining them could be prohibitive.

The concept of active subspace is closely related to the sufficient dimension reduction (SDR) \cite{cook1994interpretation,cook2009regression} and effective dimension reduction (EDR) \cite{li1991sliced} in the statistical community. Given an explanatory variable $\mathbf{X}\in\mathbb{R}^m$ (input) and response variable $Y$ (output), the aim of SDR (or EDR) is to find the directions in the subspace of $\mathbf{X}$ that contain sufficient information about the response for statistical inference. More specifically, a SDR $R(\mathbf{X})\in\mathbb{R}^d$ where $d<m$ satisfies $p(Y|\mathbf{X})=\tilde{p}(Y|R(\mathbf{X})),$ where $p(Y|\mathbf{X})$ and $\tilde{p}(Y|R(\mathbf{X}))$ are conditional probability density functions with respect to $\mathbf{X}$ and $R(\mathbf{X})$ respectively. The EDR approach aims to specifically find a linear projection matrix $\mathbf{B}$ onto a $d$-dimensional subspace $(d<m)$ such that $\mathbf{B}^{T}\mathbf{B}=\mathbf{I}_d$ and 
\begin{equation}\label{prob-eq}
p(Y|\mathbf{X})=\tilde{p}(Y|\mathbf{B}^{T}\mathbf{X})~~~\textrm{or equivalently}~~~ Y\perp X|\mathbf{B}^{T}X.
\end{equation}
Several methods have been developed to find SDR including nonparametric approaches such as sliced inverse regression (SIR) \cite{li1991sliced}, minimum average variance estimation (MAVE) \cite{xia2002adaptive}, and parametric approaches like 
principal fitted components (PFC) \cite{cook2007fisher,cook2008principal}. In this paper, we adopt the gradient-based kernel dimension reduction (gKDR) \cite{fukumizu2014gradient} to construct low-dimensional approximations to the simulators. The gKDR approach does not require any strong assumptions on the variables and distributions. The response variable can be of arbitrary type: continuous or discrete, univariate or multivariate. Unlike the active subspace method, gradients are not required to be computed explicitly but are estimated non-parametrically and implicitly using stable kernel methods. Our proposed approach therefore provides an answer to the dimension reduction issue in emulation for a wide range of problems that cannot be tackled using existing methods at the moment. Moreover, the gKDR approach ends up with an eigen-problem without any needs of elaborate numerical optimization and thus can be applied to large and high-dimensional problems.

We introduce a joint framework to approximate the high-dimensional simulators by combining statistical emulators with the gKDR approach. Deterministic simulators are considered here, however the framework could potentially be applied to stochastic simulators, with additional treatments to the stochastic effect in the emulation, see e.g. \cite{henderson2009bayesian}. Throughout this paper, the mainstream Gaussian process (GP) emulators are employed for illustration. But the general framework and most of the results would potentially hold for other emulation techniques.

The paper is organized as follows. \Cref{sec:gp} and \ref{sec:gkdr} review GP emulation and the gKDR approach respectively. In \cref{sec:model}, a joint framework of dimension reduction combined with emulation is proposed and some theoretical properties are established. \Cref{sec:sim} contains the numerical experiments on an elliptic PDE and an application to the propagation of uncertainties in the bathymetry to tsunami wave heights, as well as comparison with alternative methods. \Cref{sec:dis} consists of some conclusive discussion.

\section{Gaussian process emulation}
\label{sec:gp}
A Gaussian process is a collection of random variables such that any finite subset of these variables follow a joint Gaussian distribution \cite{rasmussen2006gaussian}. It is widely used in various scientific fields. Here we briefly review some basics of its application in statistical emulation.

A deterministic simulator with multivariate input $\mathbf{X}=(x_1,...,x_m)^T\in\mathbb{R}^m$ and univariate output $Y\in\mathbb{R}$ can be represented as $Y=f(\mathbf{X})$. The GP emulator assumes that the simulator output $Y=f(\mathbf{X})$ can be modeled with a Gaussian process. It is commonly assumed that the mean can be written as $\E(Y)=m(\mathbf{X})=\mathbf{h}^T(\mathbf{X})\boldsymbol{\beta}$, where $\mathbf{h}(\mathbf{X})$ is a $q$-vector of pre-defined regression functions and the coefficients $\boldsymbol{\beta}\in\mathbb{R}^q$. In practice, a constant or linear form for the regression functions would perform well. The covariance between two simulator outputs $Y=f(\mathbf{X})$ and $Y'=f(\mathbf{X}')$ is usually represented as $\Cov(Y,Y')=k(\mathbf{X},\mathbf{X}')=\sigma^2 c(\mathbf{X},\mathbf{X}')$, 
where the positive scalar parameter $\sigma^2$ is the process variance and $c(\mathbf{X},\mathbf{X}')$ is the correlation function. One common choice for the correlation function is the squared-exponential correlation $c(\mathbf{X},\mathbf{X}')=\prod_{i=1}^m \exp\left(-(x_i-x_i')^2/\delta_i^2\right),$
where $\boldsymbol{\delta}=(\delta_1,...,\delta_m)^T\in (0,\infty)^m$ controls the correlation lengths. 

Suppose that the simulator is run at $n$ inputs $\mathbf{X}_1,...,\mathbf{X}_n$ and the outputs are $Y_1,...,Y_n$ respectively. We may need to impose a prior for the parameter $\boldsymbol{\beta}$ in the mean function  $m(\mathbf{X})=\mathbf{h}^T(\mathbf{X})\boldsymbol{\beta}$. One of the popular choices is a Gaussian prior, $\boldsymbol{\beta}\sim \N(\mathbf{b},\mathbf{V})$, which forms a conjugate prior with the GP likelihood. At any $n^*$ desired inputs $\mathbf{X}_1^*,\mathbf{X}_2^*,...,\mathbf{X}_{n^*}^*$, the respective outputs are denoted by $Y_1^*,Y_2^*,...,Y_{n^*}^*$. Then letting $\mathbf{H}=(\mathbf{h}(\mathbf{X}_1),...,\mathbf{h}(\mathbf{X}_n))$ and $\mathbf{H}^{*}=(\mathbf{h}(\mathbf{X}_1^*),...,\mathbf{h}(\mathbf{X}_n^*))$, the predictive process of $\mathbf{Y}^{*}$ (also known as kriging prediction) given the observed data and the covariance function is
\begin{equation}
\mathbf{Y}^*|\mathbf{Y},k(\cdot,\cdot;\boldsymbol{\theta})\sim \N(\hat{\mathbf{m}}^*,\hat{\boldsymbol{\Sigma}}^{*}),
\end{equation}
with
\begin{equation}
\hat{\mathbf{m}}^*=\mathbf{H}^{*T}\hat{\boldsymbol{\beta}}+\mathbf{K}_{*}\mathbf{K}^{-1}(\mathbf{Y}-\mathbf{H}^{T}\hat{\boldsymbol{\beta}}),
\end{equation}
and
\begin{equation}
\hat{\boldsymbol{\Sigma}}^*=\mathbf{K}_{**}+\mathbf{P}^{T}(\mathbf{V}^{-1}+\mathbf{H}\mathbf{K}^{-1}\mathbf{H}^{T})^{-1}\mathbf{P},
\end{equation}
where $\boldsymbol{\theta}$ represents the hyperparameters from the covariance function, $\mathbf{K}$, $\mathbf{K}_{*}$ and $\mathbf{K}_{**}$ are $n\times n$, $n^{*}\times n$ and $n^{*}\times n^{*}$ matrices respectively with the associated $(i,j)$-th entry as $\mathbf{K}(i,j)=k(\mathbf{X}_i,\mathbf{X}_j)$, $\mathbf{K}_{*}(i,j)=k(\mathbf{X}_i^*,\mathbf{X}_j)$ and $\mathbf{K}_{**}(i,j)=k(\mathbf{X}_i^*,\mathbf{X}_j^*)$, $\mathbf{P}=\mathbf{H}^*-\mathbf{H}\mathbf{K}^{-1}\mathbf{K}_{*}^{T}$ and $\hat{\boldsymbol{\beta}}=(\mathbf{V}^{-1}+\mathbf{H}\mathbf{K}^{-1}\mathbf{H}^{T})^{-1}(\mathbf{H}\mathbf{K}^{-1}\mathbf{Y}+\mathbf{V}^{-1}\mathbf{b})$.

We can see that the output at any desired input predicted using a GP emulator is a distribution rather than a single value. This could be used to estimate the uncertainty introduced into the prediction with emulator and to evaluate the confidence about the prediction. However, the hyperparameters $\boldsymbol{\theta}$ is usually unknown and needs to be specified properly. It is possible to make a fully Bayesian inference with appropriate prior $\pi(\boldsymbol{\theta})$. But this usually requires costly MCMC approach for the analytically intractable posterior. In practice, a computationally cheap alternative is often employed by specifying the hyperparameters $\boldsymbol{\theta}$ at the most probable values. This could be done by maximizing the marginal likelihood,
\begin{align}
\begin{split}
L(\boldsymbol{\theta})&=\log p(\mathbf{Y}|\mathbf{b},\mathbf{V})\\&
=-\frac{1}{2}(\mathbf{H}^{T}\mathbf{b}-\mathbf{Y})^{T}(\mathbf{K}+\mathbf{H}^{T}\mathbf{V}\mathbf{H})^{-1}(\mathbf{H}^{T}\mathbf{b}-\mathbf{Y})
-\frac{1}{2}\log |\mathbf{K}+\mathbf{H}^{T}\mathbf{V}\mathbf{H}|-\frac{n}{2}\log 2\pi.
\end{split}
\end{align}
Then the prediction can be completed by plugging in $\hat{\boldsymbol{\theta}}=\argmax_{\boldsymbol{\theta}}L(\boldsymbol{\theta})$.

Usually there is no sufficient information about the parameter $\boldsymbol{\beta}$, hence a vague prior can be imposed by letting $\mathbf{V}^{-1}\rightarrow \mathbf{O}$ and $\mathbf{b}=\mathbf{0}$, where $\mathbf{O}$ is the matrix of zeros. In this case, the conditional mean and covariance of the predictive process are respectively 
\begin{equation}
\hat{\mathbf{m}}^*=\mathbf{H}^{*T}\hat{\boldsymbol{\beta}}+\mathbf{K}_{*}\mathbf{K}^{-1}(\mathbf{Y}-\mathbf{H}\hat{\boldsymbol{\beta}}),
\end{equation}
and
\begin{equation}
\hat{\boldsymbol{\Sigma}}^*=\mathbf{K}_{**}+\mathbf{P}^{T}(\mathbf{H}\mathbf{K}^{-1}\mathbf{H}^{T})^{-1}\mathbf{P},
\end{equation}
where $\hat{\boldsymbol{\beta}}=(\mathbf{H}\mathbf{K}^{-1}\mathbf{H}^{T})^{-1}\mathbf{H}\mathbf{K}^{-1}\mathbf{Y}$.
This is closely related to the $t$-process \cite{o1994kendall} when a weak prior for $(\boldsymbol{\beta},\sigma^2,\boldsymbol{\delta})$ that $\pi(\boldsymbol{\beta},\sigma^2,\boldsymbol{\delta})\propto \sigma^{-2}\pi_{\boldsymbol{\delta}}(\boldsymbol{\delta})$ is assumed with the mean function $m(\cdot)=\mathbf{h}^{T}(\cdot)\boldsymbol{\beta}$ and the covariance function $k(\cdot,\cdot)=\sigma^2 c(\cdot,\cdot;\boldsymbol{\delta})$, where $\boldsymbol{\delta}$ contains the parameters in the correlation function $c(\cdot,\cdot)$.

When $k(\mathbf{X},\mathbf{X}')=\sigma^2 c(\mathbf{X},\mathbf{X}')$ is used with a continuous correlation function, such as the squared-exponential correlation, the emulator interpolates through the training data, i.e. $\hat{m}(\mathbf{X}_i)=Y_i$ and $\hat{v}(\mathbf{X}_i)=0$ at the training points $\{\mathbf{X}_i\}_{i=1}^n$. When a nugget term is included, this is no longer true. A nugget term can be included, e.g. to mitigate numerical instabilities or account for the stochastic terms in simulations \cite{andrianakis2012effect}. The correlation function $c(\mathbf{X},\mathbf{X}')$ can be extended with the addition of a nugget as  $\tilde{c}(\mathbf{X},\mathbf{X}')=\nu I_{\mathbf{X}=\mathbf{X}'}+(1-\nu)c(\mathbf{X},\mathbf{X}'),$
where $\nu>0$ is the nugget term, and $I_{\mathbf{X}=\mathbf{X}'}$ is the indicator function that takes $1$ if $\mathbf{X}=\mathbf{X}'$ and $0$ otherwise. The associated correlation matrix is $\tilde{\mathbf{K}}=(1-\nu)\mathbf{K}+\nu \mathbf{I}$, where $\mathbf{I}$ is the identity matrix.

In practice, the error in the prediction of GP emulator depends on the number of training data points. As there are more and more training data points, the GP emulator will be expected to recover the simulator. There are several theoretical results on how well the GP emulator $\hat{f}$ can approximate the simulator $f$ in the literature. For example, given $n$ training samples that are quasi-uniformly distributed on $\Omega\subset\mathbb{R}^{d}$, the error can be bounded \cite{fasshauer2011positive} as $\|f-\hat{f}\|_{\infty}\leq C_d n^{-p/d}\|f\|_{\mathcal{H}}$ for any $f$ in some function space $\mathcal{H}$ over $\Omega$, typically a Hilbert space, where $p$ controls the smoothness of the function and $C_d$ is a constant depending on the dimension $d$. This result suggests that $\hat{f}$ provides arbitrarily high approximation order when $p=\infty$, i.e. $f$ is infinitely smooth. However, this rate decreases as the dimension increases and the constant $C_d$ also grows with $d$. Hence, the approximation deteriorates in very high dimensions. This implies that more evaluations of the simulator are required to train an accurate emulator when the number of input parameters $d$ increases and the associated computational cost of constructing an emulator could increase dramatically as a result. Therefore, it is desirable to reduce the dimension of the problem from the perspectives of both accuracy and efficiency.

\section{Gradient-based kernel dimension reduction}
\label{sec:gkdr}
For a set $\Omega$, a symmetric kernel $k:\Omega\times\Omega\rightarrow\mathbb{R}$ is positive-definite if $\sum_{i,j=1}^n c_ic_jk(\omega_i,\omega_j)\geq 0$ for any $\omega_1,...,\omega_n\in\Omega$ and $c_1,...,c_n\in\mathbb{R}$. Then a positive-definite kernel $k$ on $\Omega$ is uniquely associated with a Hilbert space $\mathcal{H}$ consisting of functions on $\Omega$ such that 1) $k(\cdot,\omega)\in\mathcal{H}$; 2) the linear hull of $\{k(\cdot,\omega)|\omega\in\Omega\}$ is dense in $\mathcal{H}$; 3) $\langle h,k(\cdot,\omega)\rangle_{\mathcal{H}}=h(\omega)$ for any $\omega\in\Omega$ and $h\in\mathcal{H}$ where $\langle\cdot,\cdot\rangle_{\mathcal{H}}$ is the inner product in $\mathcal{H}$. Because the third property implies that the kernel $k$ reproduces any function $h\in\mathcal{H}$, the Hilbert space $\mathcal{H}$ is called the reproducing kernel Hilbert space (RKHS) associated with $k$. Let $(\mathbf{X},Y)$ be a random vector on the domain $\mathbb{R}^m\times\mathcal{Y}$, and $k_{\mathcal{X}}$ and $k_{\mathcal{Y}}$ be positive definite-kernels on $\mathbb{R}^m$ and $\mathcal{Y}$ with respective RKHS $\mathcal{H}_{\mathcal{X}}$ and $\mathcal{H}_{\mathcal{Y}}$. We shortly present the salient facts about the gKDR approach.

Fukumizu and Leng \cite{fukumizu2014gradient} noted that for any $g\in \mathcal{H}_{\mathcal{Y}}$, there exists a function $\varphi_g(\mathbf{z})$ on $\mathbb{R}^d$ such that 
\begin{equation}
\E\left[g(Y)|\mathbf{X}\right]=\varphi_g(\mathbf{B}^{T}\mathbf{X}).
\end{equation}
Then, under mild assumptions, we have, for any $\mathbf{X}=\mathbf{x}$,
\begin{equation}
\frac{\partial}{\partial x_{i}}\E\left[g(Y)|\mathbf{X}=\mathbf{x}\right]=\sum\limits_{a=1}^d \mathbf{B}_{ia}\langle g,\nabla_a\varphi(\mathbf{B}^{T}\mathbf{x})\rangle_{\mathcal{H}_{\mathcal{Y}}}.
\end{equation}

On the other hand, defining the cross-covariance operator $C_{Y\mathbf{X}}:\mathcal{H}_{\mathcal{X}}\rightarrow \mathcal{H}_{\mathcal{Y}}$ as the operator such that
\begin{equation}
\langle h_2,C_{Y\mathbf{X}}h_1\rangle_{\mathcal{H}_{\mathcal{Y}}}=\E\left[h_1(\mathbf{X})h_2(Y)\right],
\end{equation}
holds for all $h_1\in\mathcal{H}_{\mathcal{X}}$, $h_2\in\mathcal{H}_{\mathcal{Y}}$, and using the fact that
\begin{equation}
C_{\mathbf{X}\mathbf{X}}\E[g(Y)|\mathbf{X}]=C_{\mathbf{X}Y}g,
\end{equation}
if $\E[g(Y)|\mathbf{X}]\in\mathcal{H}_{\mathcal{X}}$ for any $g\in\mathcal{H}_{\mathcal{Y}}$ \cite{fukumizu2004dimensionality}, we obtain
\begin{equation}
\frac{\partial}{\partial x_{i}}\E\left[g(Y)|\mathbf{X}=\mathbf{x}\right]=\left\langle g,C_{Y\mathbf{X}}C_{\mathbf{X}\mathbf{X}}^{-1}\frac{\partial k_{\mathcal{X}(\cdot,\mathbf{x})}}{\partial x_{i}}\right\rangle_{\mathcal{H}_{\mathcal{Y}}}.
\end{equation}

Equating the two expressions above yields for $i,j=1,...,m$,
\begin{align}
\begin{split}
\mathbf{M}_{ij}(\mathbf{x})&=\left\langle C_{Y\mathbf{X}}C_{\mathbf{X}\mathbf{X}}^{-1}\frac{\partial k_{\mathcal{X}(\cdot,\mathbf{x})}}{\partial x_{i}}, C_{Y\mathbf{X}}C_{\mathbf{X}\mathbf{X}}^{-1}\frac{\partial k_{\mathcal{X}(\cdot,\mathbf{x})}}{\partial x_{j}}\right\rangle_{\mathcal{H}_{\mathcal{Y}}}\\
&=\sum\limits_{a,b=1}^d \mathbf{B}_{ia}\mathbf{B}_{jb}\langle \nabla_a\varphi(\mathbf{B}^{T}\mathbf{x}),
\nabla_b\varphi(\mathbf{B}^{T}\mathbf{x})\rangle_{\mathcal{H}_{\mathcal{Y}}}.
\end{split}
\end{align}
Therefore, the dimension reduction projection matrix $\mathbf{B}$ is formed as the eigenvectors associated with the nontrivial eigenvalues of the $m\times m$ matrix $\mathbf{M}(\mathbf{x})$.

Given i.i.d. samples $(\mathbf{X}_1,Y_1), ..., (\mathbf{X}_n,Y_n)$, the matrix $\mathbf{B}$ can be approximated with $\tilde{\mathbf{B}}$ \cite{fukumizu2014gradient} that contains the first $d$ eigenvectors of the following $m\times m$ symmetric matrix,
\begin{equation}
\tilde{\mathbf{M}}_n=\frac{1}{n}\sum\limits_{i=1}^n \nabla\mathbf{k}_\mathbf{X}(\mathbf{X}_i)^{T}(\mathbf{G}_\mathbf{X}+n\epsilon_n\mathbf{I})^{-1}\mathbf{G}_Y(\mathbf{G}_\mathbf{X}+n\epsilon_n\mathbf{I})^{-1}\nabla\mathbf{k}_\mathbf{X}(\mathbf{X}_i),
\end{equation} 
where $\mathbf{G}_\mathbf{X}$ and $\mathbf{G}_Y$ are the Gram matrices with the $(i,j)$-entry as $k_{\mathcal{X}}(\mathbf{X}_i,\mathbf{X}_j)$ and $k_{\mathcal{Y}}(Y_i,Y_j)$ respectively, $\nabla\mathbf{k}_\mathbf{X}(\cdot) = (\partial k_{\mathcal{X}}(\mathbf{X}_1,\cdot)/\partial \mathbf{x},...,\partial k_{\mathcal{X}}(\mathbf{X}_n,\cdot)/\partial \mathbf{x})^{T}\in \mathbb{R}^{n\times m}$.

Sometimes there may not exist such a sufficient subspace rigorously so that $d=m$, or we may want to select less dimensions $d'<d$ for later analysis even in cases where such a subspace exists in order to achieve a more stringent reduction (albeit with a small loss). For convenience, we slightly reformulate the gKDR approach into a more general form without any change to the results in \cite{fukumizu2014gradient}. Let $\mathbf{W}$ be an $m\times m$ matrix with $\mathbf{W}^{T}\mathbf{W}=\mathbf{I}_m$, satisfying $p(Y|\mathbf{X})=\tilde{p}(Y|\mathbf{W}^{T}\mathbf{X})$. In fact, if there exists a $\mathbf{B}$ matrix satisfying $(\ref{prob-eq})$, we can just set $\mathbf{W}=[\mathbf{B}~\mathbf{C}]$, where $\mathbf{C}$ is an
$m\times (m-d)$ matrix such that $\mathbf{C}^{T}\mathbf{C}=\mathbf{I}_{m-d}$ and the column vectors of $\mathbf{C}$ are orthogonal to those of $\mathbf{B}$; otherwise, $\mathbf{W}=\mathbf{B}$ and $d=m$.

Following the same procedure as before, it is easy to see that
\begin{equation}
\mathbf{M}_{ij}(\mathbf{x})=\sum_{a,b=1}^m \mathbf{W}_{ia}\mathbf{W}_{jb}\langle \nabla_a\varphi(\mathbf{W}^{T}\mathbf{x}),
\nabla_b\varphi(\mathbf{W}^{T}\mathbf{x})\rangle_{\mathcal{H}_{\mathcal{Y}}}.
\end{equation}
If there exists $\mathbf{B}$ satisfying $(\ref{prob-eq})$ with $d<m$, $\nabla_a\varphi(\mathbf{W}^{T}\mathbf{x})=0$ for any $a>d$, hence the respective columns  correspond to the zero eigenvalues of $\mathbf{M}(\mathbf{x})$. The projection matrix $\mathbf{W}$ does not depend on the value of $\mathbf{x}$, while the nontrivial eigenvalues vary with $\mathbf{x}$. Therefore, we obtain the following eigen-decomposition
\begin{equation}\label{M-eigen}
\mathbf{M}(\mathbf{x})=\mathbf{W}\boldsymbol{\Lambda}(\mathbf{x})\mathbf{W}^{T},~\boldsymbol{\Lambda}(\mathbf{W})=\diag(\lambda_1(\mathbf{x}),...,\lambda_m(\mathbf{x})).
\end{equation}

\section{Joint emulation with dimension reduction}
\label{sec:model}
The gKDR approach is now applied together with GP emulation, to construct a low-dimensional approximation to a simulator. 
Thus the following procedure is employed to emulate a high-dimensional simulator.

{\bf Step 1.} Given a set of $n_1$ simulator's runs $(\mathbf{X}_1,Y_1),...,(\mathbf{X}_{n_1},Y_{n_1})$, estimate the projection matrix $\tilde{\mathbf{W}}$ using the gKDR approach.

{\bf Step 2.} Split $\tilde{\mathbf{W}}$ into $[\tilde{\mathbf{W}}_1~\tilde{\mathbf{W}}_2]$, where $\tilde{\mathbf{W}}_1$ consists of the first $d$ columns of $\tilde{\mathbf{W}}$ corresponding to the largest $d$ eigenvectors.

{\bf Step 3.} Design a set of $n_2$ runs $(\mathbf{X}'_1,Y'_1),...,(\mathbf{X}'_{n_2},Y'_{n_2})$ of the simulator, e.g. based on the reduced space $\tilde{\mathbf{W}}_1^{T}\mathbf{X}$, and construct an emulator using the lower dimensional pairs $(\tilde{\mathbf{W}}_1^{T}\mathbf{X}'_1,Y'_1),...,(\tilde{\mathbf{W}}_1^{T}\mathbf{X}'_{n_2},Y'_{n_2})$.

In Step 1, sufficient samples are needed to estimate $\tilde{\mathbf{W}}$ accurately. The theoretical results in \cite{fukumizu2014gradient} on the convergence rate of $\tilde{\mathbf{M}}_n$ would provide some insights. In practice, the number of directions that have a major influence may also affect the sample size $n_1$ needed. Step 2 requires an appropriate selection of $d$ to construct an efficient and effective emulator. The samples to train the emulator in Step 3 can be different (e.g. additional runs) from those already collected to find $\tilde{\mathbf{W}}$ in Step 1. There is a benefit in terms of design arising from the dimension reduction. Indeed, in step 3, the design can be built to explore the reduced space of possible $\tilde{\mathbf{W}}_1^{T}\mathbf{X}'$ but the actual inputs of the simulator are of the corresponding high-dimensional values of $\mathbf{X}'$, as the dimensions left out are deemed unimportant. 

\subsection{Approximation properties}
We now explore some theoretical properties of the low-dimensional approximation to a simulator using the gKDR approach. For any $\mathbf{X}=\mathbf{x}\in\mathcal{R}^m$, if $\mathbf{M}(\mathbf{x})$ is known exactly, we have the eigen-decomposition $(\ref{M-eigen})$. Suppose that the eigenvectors and eigenvalues are partitioned as
\begin{equation}
\boldsymbol{\Lambda}(\mathbf{x})=\left[\begin{matrix}
\boldsymbol{\Lambda}_1(\mathbf{x}) & \\
 & \boldsymbol{\Lambda}_2(\mathbf{x})
\end{matrix}\right],~~~\mathbf{W}=[\mathbf{W}_1~\mathbf{W}_2],
\end{equation}
where $\boldsymbol{\Lambda}_1(\mathbf{x})=\diag(\lambda_1(\mathbf{x}),...,\lambda_d(\mathbf{x}))$ with $d<m$ consists of the first $d$ largest eigenvalues, $\mathbf{W}_1$ is the $m\times d$ matrix whose columns are the associated eigenvectors.  Then for any $\mathbf{X}$, we can define the projected coordinates by
$\mathbf{U}=\mathbf{W}_1^{T}\mathbf{X}\in\mathbb{R}^d$ and $\mathbf{V}=\mathbf{W}_2^{T}\mathbf{X}\in\mathbb{R}^{m-d}$. Our proposed approach suggests to make inference on $Y$ based on $\mathbf{U}$ instead of the full explanatory variable $\mathbf{X}$. The following proposition establishes an error bound for such approximation.
\begin{proposition}\label{prop1}
For any $g\in\mathcal{H}_{\mathcal{Y}}$ and $\mathbf{u}\in\mathbb{R}^{d}$, we approximate $\E[g(Y)|\mathbf{X}=\mathbf{x}]$ by $\E[g(Y)|\mathbf{U}=\mathbf{u}]$ for any $\mathbf{x}$ such that $\mathbf{W}_1^{T}\mathbf{x}=\mathbf{u}$. The approximation error is bounded as follows:
\begin{equation}
\left\|\E[g(Y)|\mathbf{X}=\mathbf{x}]-\E[g(Y)|\mathbf{U}=\mathbf{u}]\right\|_{L_2}^2\leq C_1\left(\sum\limits_{i=d+1}^m b_i\lambda_i^2(\mathbf{x})\right),
\end{equation}
where $C_1$ is a constant depending on the domain of $\mathbf{x}$, $b_i~(i=d+1,...,m)$ are positive constants relating to $\mathbf{W}_1$ and $g$.
\end{proposition}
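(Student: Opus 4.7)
The plan is a first-order Taylor argument along the discarded directions $\mathbf{W}_2$, in which the coordinate-wise derivatives of the sufficient-statistic representation are controlled by the $\lambda_a(\mathbf{x})$ made available by the eigen-decomposition of $\mathbf{M}(\mathbf{x})$. First, the orthogonality of $\mathbf{W}=[\mathbf{W}_1~\mathbf{W}_2]$ gives the decomposition $\mathbf{x}=\mathbf{W}_1\mathbf{u}+\mathbf{W}_2\mathbf{v}$, so Section~\ref{sec:gkdr}'s representation yields $\E[g(Y)\mid\mathbf{X}=\mathbf{x}]=\varphi_g(\mathbf{u},\mathbf{v})$, where $\varphi_g(\mathbf{z}):=\langle g,\varphi(\mathbf{z})\rangle_{\mathcal{H}_{\mathcal{Y}}}$. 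The tower property then gives $\E[g(Y)\mid\mathbf{U}=\mathbf{u}]=\E[\varphi_g(\mathbf{u},\mathbf{V})\mid\mathbf{U}=\mathbf{u}]$, so the pointwise error reduces to
\begin{equation*}
E(\mathbf{x})\;=\;\varphi_g(\mathbf{u},\mathbf{v})\;-\;\E\!\left[\varphi_g(\mathbf{u},\mathbf{V})\mid\mathbf{U}=\mathbf{u}\right].
\end{equation*}

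The crucial step is to identify each $\lambda_a(\mathbf{x})$ with a squared $\mathcal{H}_{\mathcal{Y}}$-norm of $\nabla_a\varphi$. Equating the coordinate expression $\mathbf{M}_{ij}(\mathbf{x})=\sum_{a,b}\mathbf{W}_{ia}\mathbf{W}_{jb}\langle\nabla_a\varphi,\nabla_b\varphi\rangle_{\mathcal{H}_{\mathcal{Y}}}$ with the eigen-decomposition $\mathbf{M}(\mathbf{x})=\mathbf{W}\boldsymbol{\Lambda}(\mathbf{x})\mathbf{W}^{T}$ forces $\langle\nabla_a\varphi(\mathbf{W}^{T}\mathbf{x}),\nabla_b\varphi(\mathbf{W}^{T}\mathbf{x})\rangle_{\mathcal{H}_{\mathcal{Y}}}=\lambda_a(\mathbf{x})\delta_{ab}$. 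Applying Cauchy--Schwarz in $\mathcal{H}_{\mathcal{Y}}$ then gives the essential pointwise derivative bound
\begin{equation*}
|\partial_a\varphi_g(\mathbf{W}^{T}\mathbf{x})|^{2}\;=\;|\langle g,\nabla_a\varphi(\mathbf{W}^{T}\mathbf{x})\rangle_{\mathcal{H}_{\mathcal{Y}}}|^{2}\;\le\;\|g\|_{\mathcal{H}_{\mathcal{Y}}}^{2}\,\lambda_a(\mathbf{x}),
\end{equation*}
which says precisely that $\varphi_g$ varies slowly in each discarded direction $a=d+1,\ldots,m$.

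To close the argument, I would use Jensen's inequality to obtain $|E(\mathbf{x})|^{2}\le\E[|\varphi_g(\mathbf{u},\mathbf{v})-\varphi_g(\mathbf{u},\mathbf{V})|^{2}\mid\mathbf{u}]$, then represent each inner difference via the fundamental theorem of calculus along the segment from $(\mathbf{u},\mathbf{V})$ to $(\mathbf{u},\mathbf{v})$, apply Cauchy--Schwarz on the sum over $a=d+1,\ldots,m$, and finally substitute the pointwise derivative bound. The coordinate gaps $(v_a-V_a)^{2}$ are controlled by the diameter of the domain of $\mathbf{v}$ and the conditional moments of $\mathbf{V}\mid\mathbf{U}$, and are absorbed into the weights $b_i$; the factor $\|g\|_{\mathcal{H}_{\mathcal{Y}}}^{2}$ together with the domain diameter is absorbed into $C_1$.

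Two delicate points will require care. The first is that the identity $\|\nabla_a\varphi\|_{\mathcal{H}_{\mathcal{Y}}}^{2}=\lambda_a(\mathbf{x})$ is pinned to the specific argument $\mathbf{W}^{T}\mathbf{x}$, whereas the FTC evaluates $\partial_a\varphi_g$ at interior points of a segment in $\mathbf{x}$-space; replacing those interior evaluations by $\lambda_a(\mathbf{x})$ requires a regularity/compactness argument on $\mathbf{x}\mapsto\lambda_a(\mathbf{x})$, which is precisely what justifies the dependence of $C_1$ on the domain of $\mathbf{x}$. The second and more substantive issue is matching the $\lambda_i^{2}(\mathbf{x})$ exponent stated in the proposition: the route above naturally produces a bound linear in $\lambda_a(\mathbf{x})$, so the squared power must arise either from a second Cauchy--Schwarz that trades an extra $\sqrt{\lambda_a}$ out of the coordinate gap (using that the conditional fluctuation of $\mathbf{V}\mid\mathbf{U}$ in slow directions is itself tied to small $\lambda_a$), or from reading the $\lambda_i$'s as singular values of $\sqrt{\mathbf{M}(\mathbf{x})}$ so that $\lambda_i^{2}$ matches the eigenvalues of $\mathbf{M}(\mathbf{x})$ appearing in my derivation. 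This bookkeeping step is where I expect to spend the most effort.
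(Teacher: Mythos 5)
Your overall architecture matches the paper's: decompose $\mathbf{x}=\mathbf{W}_1\mathbf{u}+\mathbf{W}_2\mathbf{v}$, identify $\E[g(Y)|\mathbf{U}=\mathbf{u}]$ as the conditional average of $G(\mathbf{W}_1\mathbf{u}+\mathbf{W}_2\mathbf{v})$ over $\mathbf{v}$, and bound the resulting error by the gradient of $G$ in the discarded directions (the paper invokes the Poincar\'e inequality, which is exactly the Jensen-plus-fundamental-theorem-of-calculus argument you sketch, so that part of your plan is sound and your first ``delicate point'' about where $\lambda_a$ is evaluated is glossed over by the paper in the same way). However, the step you flag as unresolved is a genuine gap, and your two proposed escape routes do not close it. Your derivative bound $|\partial_a\varphi_g|^2=|\langle g,\nabla_a\varphi\rangle|^2\le\|g\|_{\mathcal{H}_{\mathcal{Y}}}^2\,\lambda_a(\mathbf{x})$ is correct but lossy: Cauchy--Schwarz against an arbitrary $g$ can only ever produce one power of $\lambda_a$, and no second Cauchy--Schwarz on the coordinate gaps $(v_a-V_a)^2$ will manufacture the other power, since the conditional law of $\mathbf{V}\mid\mathbf{U}$ is a property of the input distribution and has nothing to do with the eigenvalues of $\mathbf{M}(\mathbf{x})$; likewise there is no rereading of the $\lambda_i$ as singular values that fixes the exponent.

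The missing idea is a structural decomposition of $g$ itself. Writing $\phi_i=C_{Y\mathbf{X}}C_{\mathbf{X}\mathbf{X}}^{-1}\frac{\partial k_{\mathcal{X}(\cdot,\mathbf{x})}}{\partial x_i}$ and defining $\Phi g=\left(\langle\phi_1,g\rangle_{\mathcal{H}_{\mathcal{Y}}},\dots,\langle\phi_m,g\rangle_{\mathcal{H}_{\mathcal{Y}}}\right)^{T}$ with adjoint $\Phi^{*}\mathbf{a}=\sum_i a_i\phi_i$, one has $\Phi\Phi^{*}=\mathbf{M}(\mathbf{x})$ and $\nabla_{\mathbf{x}}G=\Phi g$. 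Since $\Ra(\Phi^{*})$ is finite-dimensional, $\mathcal{H}_{\mathcal{Y}}=\Ra(\Phi^{*})\oplus\Ker(\Phi)$, so $g=\Phi^{*}\mathbf{a}+g^{\perp}$ with $g^{\perp}\in\Ker(\Phi)$, and therefore $\nabla_{\mathbf{x}}G=\Phi\Phi^{*}\mathbf{a}=\mathbf{M}(\mathbf{x})\mathbf{a}$ \emph{exactly}. Rotating, $\nabla_{\mathbf{v}}G=\mathbf{W}_2^{T}\mathbf{M}(\mathbf{x})\mathbf{a}=\boldsymbol{\Lambda}_2(\mathbf{x})\mathbf{W}_2^{T}\mathbf{a}$, whose $a$-th component is $\lambda_a(\mathbf{x})(\mathbf{W}^{T}\mathbf{a})_a$; squaring gives $b_a\lambda_a^2(\mathbf{x})$ with $b_a=(\mathbf{W}^{T}\mathbf{a})_a^2$, which is precisely the quadratic dependence in the statement. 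In other words, the derivative in a discarded direction is an identity proportional to $\lambda_a$, not merely bounded by $\sqrt{\lambda_a}$, because the component of $g$ orthogonal to the $\phi_i$ contributes nothing to any directional derivative of $G$. Without this step your argument proves a weaker bound with $\lambda_i$ in place of $\lambda_i^2$, which in particular fails to reproduce the exact vanishing rate claimed when the trailing eigenvalues are small but nonzero.
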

\begin{proof}
Let $G(\mathbf{x})=\E[g(Y)|\mathbf{X}=\mathbf{x}]$, and $\phi_i=C_{Y\mathbf{X}}C_{\mathbf{X}\mathbf{X}}^{-1}\frac{\partial k_{\mathcal{X}(\cdot,\mathbf{x})}}{\partial x_{i}}\in \mathcal{H}_{\mathcal{Y}}$ for $i=1,...,m$. Following \cite{amini2012sampled}, for any $g\in \mathcal{H}_{\mathcal{Y}}$, we can define a bounded linear operator $\Phi: \mathcal{H}_{\mathcal{Y}}\rightarrow \mathbb{R}^m$ on the Hilbert space such that 
\begin{equation}
\Phi g=\left[\langle\phi_1,g\rangle_{\mathcal{H}_{\mathcal{Y}}}~~
\langle\phi_2,g\rangle_{\mathcal{H}_{\mathcal{Y}}}~~\cdots~~
\langle\phi_m,g\rangle_{\mathcal{H}_{\mathcal{Y}}}\right]^{T}.
\end{equation} 
Its adjoint is a mapping $\Phi^{*}: \mathbb{R}^m\rightarrow \mathcal{H}_{\mathcal{Y}}$, defined by the relation $\langle\Phi g,\mathbf{a}\rangle_{\mathbb{R}^m}=\langle g,\Phi^{*}\mathbf{a}\rangle_{\mathcal{H}_{\mathcal{Y}}}$ for any $g\in\mathcal{H}_{\mathcal{Y}}$ and $\mathbf{a}\in\mathbb{R}^m$. Then we have \begin{equation}
\langle\Phi g,\mathbf{a}\rangle_{\mathbb{R}^m}=\sum_{i=1}^m a_i\langle\phi_i,g\rangle_{\mathcal{H}_{\mathcal{Y}}}=\langle\sum_{i=1}^m a_i\phi_i,g\rangle_{\mathcal{H}_{\mathcal{Y}}}=\langle g,\Phi^{*}\mathbf{a}\rangle_{\mathcal{H}_{\mathcal{Y}}}.
\end{equation}
Because $g$ is arbitrage, it must hold for any $\mathbf{a}\in\mathbb{R}^m$ that
$\Phi^{*}\mathbf{a}=\sum\limits_{i=1}^m a_i\phi_i.$

Defining $\mathbf{K}=\Phi\Phi^{*}\in\mathbb{R}^m$, it is easy to see that $\mathbf{K}_{ij}=\langle\phi_i,\phi_j\rangle_{\mathcal{H}_{\mathcal{Y}}}$.
From the derivation of the gKDR approach, the derivative of $G(\mathbf{x})$ w.r.t $\mathbf{x}$ is just $\nabla_{\mathbf{x}} G=\Phi g$ and $\mathbf{M}=\mathbf{K}=\Phi\Phi^{*}$. We denote the range of an operator $A$ as $\Ra(A)$ and its kernel (null space) as $\Ker(A)$. The space $\Ra(\Phi^{*})$ is finite-dimensional and hence closed, so we have the decomposition $\mathcal{H}_{\mathcal{Y}}=\Ra(\Phi^{*})\oplus \Ker(\Phi)$. In particular, for any $g\in\mathcal{H}_{\mathcal{Y}}$, there is $\mathbf{a}\in\mathbb{R}^m$ and $g^{\perp}\in \Ker(\Phi)$ such that $g=\Phi^{*}\mathbf{a}+g^{\perp}$. Hence we obtain 
\begin{equation}
\Phi g=\mathbf{M}\mathbf{a}.
\end{equation}

Given the projection of coordinates from $\mathbf{x}$ to $\mathbf{u}$ and $\mathbf{v}$, we can write
\begin{equation}
G(\mathbf{x})=G(\mathbf{W}\mathbf{W}^{T}\mathbf{x})=
G(\mathbf{W}_1\mathbf{W}_1^{T}\mathbf{x}+\mathbf{W}_2\mathbf{W}_2^{T}\mathbf{x})=
G(\mathbf{W}_1\mathbf{u}+\mathbf{W}_2\mathbf{v}).
\end{equation}
The gradient of $G$ w.r.t $\mathbf{u}$ can be obtained by the chain rule as
\begin{equation}
\nabla_{\mathbf{u}} G=\nabla_{\mathbf{u}} G(\mathbf{W}_1\mathbf{u}+\mathbf{W}_2\mathbf{v})=\mathbf{W}_1^{T}\nabla_{\mathbf{x}} G(\mathbf{x})=\mathbf{W}_1^{T}\mathbf{M}\mathbf{a}=\boldsymbol{\Lambda}_1(\mathbf{x})\mathbf{B}^{T}\mathbf{a},\end{equation}
where $\mathbf{a}\in\mathbb{R}^m$ relates to $g$. Then it is easy to see that
\begin{equation}
\|\nabla_{\mathbf{u}} G\|_{L_2}^2=\sum\limits_{i=1}^d b_i\lambda_i^2(\mathbf{x}),
\end{equation}
where the positive constants $b_i$ depend on $\mathbf{W}_1$ and $g$, for $i=1,...,d$. Similarly, we have
\begin{equation}
\|\nabla_{\mathbf{v}} G\|_{L_2}^2=\sum\limits_{i=d+1}^m b_i\lambda_i^2(\mathbf{x}),
\end{equation}
where the positive constants $b_i$ depend on $\mathbf{W}_2$ and $g$, for $i=d+1,...,m$.

We now infer $g(Y)$ based on $\mathbf{u}\in\mathbb{R}^d$ rather than $\mathbf{x}\in\mathbb{R}^m$ with $d<m$. For any $\mathbf{u}$, we have
\begin{equation}
\E[G|\mathbf{u}]=\int_\mathbf{v} G(\mathbf{W}_1\mathbf{u}+\mathbf{W}_2\mathbf{v})dP(\mathbf{v}|\mathbf{u})=\int_\mathbf{v} \E[g(Y)|\mathbf{u},\mathbf{v}])dP(\mathbf{v}|\mathbf{u})=\E[g(Y)|\mathbf{u}].
\end{equation}
Therefore, for any fixed $\mathbf{u}$, we estimate $G(\mathbf{x})=G(\mathbf{W}_1\mathbf{u}+\mathbf{W}_2\mathbf{v})$ with $E[G|\mathbf{u}]$ for any $\mathbf{x}=\mathbf{W}_1\mathbf{u}+\mathbf{W}_2\mathbf{v}$, i.e.
\begin{equation}
G(\mathbf{x})\approx \hat{G}(\mathbf{x})=\E[G|\mathbf{W}_1^{T}\mathbf{x}]=\E[G|\mathbf{u}].
\end{equation}

Note that for any fixed $\mathbf{u}$, the original function $G(\mathbf{x})=G(\mathbf{W}_1\mathbf{u}+\mathbf{W}_2\mathbf{v})$ is a function of only $\mathbf{v}$, while the approximation $\hat{G}(\mathbf{x})=E\left[G|\mathbf{u}\right]$ is in fact the average of $G(\mathbf{u},\mathbf{v})$ over all possible $\mathbf{v}$, and so is not a function of $\mathbf{v}$. The Poincar\'e inequality yields
\begin{equation}
\|G-\hat{G}\|_{L_2}^2\leq C_1\|\nabla_{\mathbf{v}} G\|_{L_2}^2=C_1\left(\sum\limits_{i=d+1}^m b_i\lambda_i^2(\mathbf{x})\right),
\end{equation}
where $C_1$ is a constant depending on the domain of $\mathbf{x}$.
\end{proof}

When $\mathbf{W}_1$ represents a sufficient dimension reduction, $\lambda_i(\mathbf{x})=0$ for $i=d+1,...,m$, which implies that $\E[g(Y)|\mathbf{X}=\mathbf{x}]=\E[g(Y)|\mathbf{U}=\mathbf{W}_1^{T}\mathbf{x}]$ exactly. Though the result is presented with conditional mean $E[g(Y)|\cdot]$ for any $g\in\mathcal{H}_{\mathcal{Y}}$, it is not limited to the first moment only. For characteristic kernels such as the popular Gaussian RBF kernel $k(x,y)=\exp(-\|x-y\|^2/(2\sigma^2))$ and the Laplace kernel $k(x,y)=\exp(-\alpha\sum_{i=1}^m|x_i-y_i|)$, probabilities are uniquely determined by their means on the associated RKHS \cite{fukumizu2014gradient}; see also \cite{gretton2012kernel} for a definition of the distance between probabilities using their means.

In practice, $\mathbf{W}$ cannot be known exactly. We can only estimate a perturbed version $\tilde{\mathbf{W}}=[\tilde{\mathbf{W}}_1~\tilde{\mathbf{W}}_2]$ instead using the eigen-decomposition of $\tilde{\mathbf{M}}_n$. Under some mild conditions, $\tilde{\mathbf{M}}_n$ converges in probability to $\E[\mathbf{M}(\mathbf{x})]$ with order $O_p\left(n^{-\min\{1/3,(2\beta+1)/(4\beta+4)\}}\right)$ for some $\beta>0$ \cite{fukumizu2014gradient}. As a result, we have the following result.
\begin{proposition}\label{prop2}
For any $g\in\mathcal{H}_{\mathcal{Y}}$ and $\tilde{\mathbf{u}}\in\mathbb{R}^{d}$, we approximate $\E[g(Y)|\mathbf{X}=\mathbf{x}]$ by $\E[g(Y)|\tilde{\mathbf{U}}=\tilde{\mathbf{u}}]$ for every $\mathbf{x}$ such that $\tilde{\mathbf{W}}_1^{T}\mathbf{x}=\tilde{\mathbf{u}}$. Then we have
\begin{align}
\begin{split}
&\left\|\E[g(Y)|\mathbf{X}=\mathbf{x}]-\E[g(Y)|\tilde{\mathbf{U}}=\tilde{\mathbf{u}}]\right\|_{L_2}^2=\\
&O_p\left(\left(\frac{4}{\lambda_d-\lambda_{d+1}}n^{-\min\{\frac{1}{3},\frac{2\beta+1}{4\beta+4}\}}\left(\sum_{i=1}^d b_i\lambda_i^2(\mathbf{x})\right)^{\frac{1}{2}}+\left(\sum_{i=d+1}^m b_i\lambda_i^2(\mathbf{x})\right)^{\frac{1}{2}}\right)^2\right),
\end{split}
\end{align}
where $C_1$ is a constant depending on the domain of $\mathbf{x}$ and the $b_i~(i=1,...,m)$ are positive constants related to $\mathbf{W}$ and $g$.
\end{proposition}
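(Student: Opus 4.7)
The plan is to split the total error into an estimation error, coming from replacing the true projector $\mathbf{W}_1$ by its sample-based approximation $\tilde{\mathbf{W}}_1$, and an approximation error, which is exactly the one already controlled by Proposition~\ref{prop1}. Introducing the intermediate quantity $\E[g(Y)\,|\,\mathbf{U}=\mathbf{W}_1^{T}\mathbf{x}]$ and applying the triangle inequality in $L_2$ gives
\begin{align*}
\bigl\|\E[g(Y)|\mathbf{X}=\mathbf{x}]-\E[g(Y)|\tilde{\mathbf{U}}=\tilde{\mathbf{u}}]\bigr\|_{L_2}
&\leq \bigl\|\E[g(Y)|\mathbf{X}=\mathbf{x}]-\E[g(Y)|\mathbf{U}=\mathbf{W}_1^{T}\mathbf{x}]\bigr\|_{L_2} \\
&\quad + \bigl\|\E[g(Y)|\mathbf{U}=\mathbf{W}_1^{T}\mathbf{x}]-\E[g(Y)|\tilde{\mathbf{U}}=\tilde{\mathbf{W}}_1^{T}\mathbf{x}]\bigr\|_{L_2}.
\end{align*}
Squaring and using $(a+b)^2 \leq \bigl((a^2)^{1/2}+(b^2)^{1/2}\bigr)^2$ puts the right-hand side in the form that appears in the statement, so it remains to bound each of the two summands and then square.

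For the second summand, Proposition~\ref{prop1} supplies directly the bound $C_1\sum_{i=d+1}^m b_i\lambda_i^2(\mathbf{x})$, which after taking square roots yields the $\bigl(\sum_{i=d+1}^m b_i\lambda_i^2(\mathbf{x})\bigr)^{1/2}$ factor in the statement. For the first summand, I would view $\hat G_{\mathbf{W}_1}(\mathbf{x}) := \E[g(Y)|\mathbf{W}_1^{T}\mathbf{X}=\mathbf{W}_1^{T}\mathbf{x}]$ as a smooth function of the projector and apply a first-order perturbation argument: writing $\tilde{\mathbf{W}}_1=\mathbf{W}_1+\Delta$, the difference $\hat G_{\tilde{\mathbf{W}}_1}(\mathbf{x})-\hat G_{\mathbf{W}_1}(\mathbf{x})$ is, to leading order, linear in $\Delta$, with the multiplicative factor being a gradient of $G$ restricted to the range of $\mathbf{W}_1$. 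That gradient was already computed inside the proof of Proposition~\ref{prop1}: $\|\nabla_{\mathbf{u}} G\|_{L_2}^2 = \sum_{i=1}^d b_i\lambda_i^2(\mathbf{x})$, which produces the first square-root factor $\bigl(\sum_{i=1}^d b_i\lambda_i^2(\mathbf{x})\bigr)^{1/2}$ in the target inequality.

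The remaining ingredient is a probabilistic bound on $\|\Delta\|$, i.e.\ on the distance between the $d$-dimensional invariant subspaces of $\mathbf{M}=\E[\mathbf{M}(\mathbf{X})]$ and of its empirical counterpart $\tilde{\mathbf{M}}_n$. Here I would invoke the Davis--Kahan $\sin\Theta$ theorem, which yields
\[
\|\tilde{\mathbf{W}}_1-\mathbf{W}_1 O\|\ \leq\ \frac{2\|\tilde{\mathbf{M}}_n-\mathbf{M}\|}{\lambda_d-\lambda_{d+1}}
\]
for some $d\times d$ orthogonal $O$, the eigenvalues being those of $\mathbf{M}$. The rotation $O$ is harmless because $\E[g(Y)|\mathbf{W}_1^{T}\mathbf{X}=\mathbf{W}_1^{T}\mathbf{x}]=\E[g(Y)|O\mathbf{W}_1^{T}\mathbf{X}=O\mathbf{W}_1^{T}\mathbf{x}]$. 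Plugging in the convergence rate $\|\tilde{\mathbf{M}}_n-\mathbf{M}\|=O_p\bigl(n^{-\min\{1/3,(2\beta+1)/(4\beta+4)\}}\bigr)$ from Fukumizu and Leng, absorbing the factor $2$ from Davis--Kahan together with another factor $2$ from the gradient bound into the constant $4$, and combining with the gradient estimate above, gives the first term in the stated bound.

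The main obstacle is making the perturbation step for conditional expectations fully rigorous: one must justify differentiating $\mathbf{W}\mapsto \E[g(Y)|\mathbf{W}^{T}\mathbf{X}=\mathbf{W}^{T}\mathbf{x}]$ in the neighbourhood of $\mathbf{W}_1$ and controlling the remainder uniformly in $\mathbf{x}$ on its domain, which is where the Poincar\'e-type constant $C_1$ enters and where one needs a well-separated gap $\lambda_d>\lambda_{d+1}$ so that Davis--Kahan can be applied and so that the implicit $O_p$ terms are first-order. The other two steps (Davis--Kahan and the convergence rate of $\tilde{\mathbf{M}}_n$) are essentially off-the-shelf.
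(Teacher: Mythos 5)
Your overall architecture differs from the paper's. The paper does not insert the intermediate quantity $\E[g(Y)|\mathbf{U}=\mathbf{W}_1^{T}\mathbf{x}]$ and does not use a triangle inequality on conditional expectations at all. Instead it repeats the Poincar\'e argument of Proposition~\ref{prop1} once, directly in the \emph{perturbed} coordinates $\tilde{\mathbf{u}}=\tilde{\mathbf{W}}_1^{T}\mathbf{x}$, $\tilde{\mathbf{v}}=\tilde{\mathbf{W}}_2^{T}\mathbf{x}$: writing $\tilde{G}=\E[G|\tilde{\mathbf{u}}]$ it gets $\|G-\tilde{G}\|_{L_2}^2\leq C_1\|\nabla_{\tilde{\mathbf{v}}}G\|_{L_2}^2$, and the chain rule $\nabla_{\tilde{\mathbf{v}}}G=\mathbf{W}_2^{T}\tilde{\mathbf{W}}_2\nabla_{\mathbf{v}}G+\mathbf{W}_1^{T}\tilde{\mathbf{W}}_2\nabla_{\mathbf{u}}G$ then splits the gradient into a ``leakage'' part controlled by the subspace distance $\|\mathbf{W}_1^{T}\tilde{\mathbf{W}}_2\|\leq 4Ce_n/(\lambda_d-\lambda_{d+1})$ (Corollary 8.1.11 of Golub and Van Loan, the same family of bounds as your Davis--Kahan step) times $\|\nabla_{\mathbf{u}}G\|_{L_2}=(\sum_{i\leq d}b_i\lambda_i^2(\mathbf{x}))^{1/2}$, and a part bounded by $\|\mathbf{W}_2^{T}\tilde{\mathbf{W}}_2\|\leq 1$ times $\|\nabla_{\mathbf{v}}G\|_{L_2}=(\sum_{i>d}b_i\lambda_i^2(\mathbf{x}))^{1/2}$. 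A single triangle inequality on these two gradient components produces the stated squared sum.

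The gap in your version is exactly the step you yourself flag as the ``main obstacle'': bounding $\|\E[g(Y)|\mathbf{U}=\mathbf{W}_1^{T}\mathbf{x}]-\E[g(Y)|\tilde{\mathbf{U}}=\tilde{\mathbf{W}}_1^{T}\mathbf{x}]\|_{L_2}$ by a first-order perturbation in the projector. This is not off-the-shelf, and the sketch understates the difficulty: the map $\mathbf{W}\mapsto\E[G|\mathbf{W}^{T}\mathbf{X}=\mathbf{W}^{T}\mathbf{x}]$ involves both the point at which $G$ is evaluated \emph{and} the conditional law $P(\mathbf{v}|\mathbf{u})$ over which one averages, and the latter also moves with $\mathbf{W}$; moreover a naive linearization of the integrand brings in the full gradient $\nabla_{\mathbf{x}}G$, not only $\nabla_{\mathbf{u}}G$, so it is not evident that your leading term is $\|\Delta\|\,(\sum_{i\leq d}b_i\lambda_i^2(\mathbf{x}))^{1/2}$ rather than something involving all $m$ eigenvalues. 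Since the sigma-algebras generated by $\mathbf{W}_1^{T}\mathbf{X}$ and $\tilde{\mathbf{W}}_1^{T}\mathbf{X}$ are not nested, there is also no tower-property shortcut. The paper's one-shot Poincar\'e-plus-chain-rule argument is precisely the device that makes this step unnecessary; to complete your proof you would either have to supply the perturbation lemma in full (with uniform control of the remainder and of the change of conditional measure) or switch to the paper's decomposition of $\nabla_{\tilde{\mathbf{v}}}G$. Your remaining ingredients --- the convergence rate of $\tilde{\mathbf{M}}_n$, the eigengap condition, and the subspace-distance bound --- do match the paper's.
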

\begin{proof}
Denoting $\tilde{\mathbf{M}}_n=\E[\mathbf{M}(\mathbf{x})]+\mathbf{E}_n$ and $e_n=n^{-\min\{1/3,(2\beta+1)/(4\beta+4)\}}$, the convergence result on $\tilde{\mathbf{M}}_n$ \cite{fukumizu2014gradient} entails that for any $\epsilon>0$, there exists a constant $C>0$ and $N_{\epsilon}$ such that
for any $n\geq N_{\epsilon}$,
\begin{equation}
P\left(\|\mathbf{E}_n\|<Ce_n\right)>1-\epsilon.
\end{equation}
Then there exists $N'_{\epsilon}$ such that for any $n\geq N'_{\epsilon}$,
\begin{equation}
Ce_n\leq \frac{\lambda_d-\lambda_{d+1}}{5},
\end{equation}
where $\lambda_1\geq\lambda_2\geq...\geq\lambda_m\geq 0$ are the eigenvalues of $\E[\mathbf{M}(\mathbf{x})]$; where $N'_{\epsilon}$ can be chosen as $N'_{\epsilon}=\max\left\{N_{\epsilon},\left((\lambda_d-\lambda_{d+1})/(5C)\right)^{-\max\{3,(4\beta+4)/(2\beta+1)\}}\right\}$.

The distance between subspaces that are spanned by columns of $\mathbf{W}_1$ and $\tilde{\mathbf{W}}_1$, denoted by $\spans(\mathbf{W}_1)$ and $\spans(\tilde{\mathbf{W}}_1)$ respectively, can be defined as \cite{golub2012matrix}
\begin{equation}
\dist(\spans(\mathbf{W}_1),\spans(\tilde{\mathbf{W}}_1))=\|\mathbf{W}_1\mathbf{W}_1^{T}-\tilde{\mathbf{W}}_1\tilde{\mathbf{W}}_1^{T}\|=\|\mathbf{W}_1^{T}\tilde{\mathbf{W}}_2\|.
\end{equation}
Using Corollary $8.1.11$ of \cite{golub2012matrix}, we have
\begin{equation}
\|\mathbf{W}_1^{T}\tilde{\mathbf{W}}_2\|\leq 4Ce_n/(\lambda_d-\lambda_{d+1}).
\end{equation}
Hence $\|\mathbf{W}_1^{T}\tilde{\mathbf{W}}_2\|=O_p\left(4e_n/(\lambda_d-\lambda_{d+1})\right).$
We also note that $\|\mathbf{W}_2^{T}\tilde{\mathbf{W}}_2\|\leq \|\mathbf{W}_2\|\|\tilde{\mathbf{W}}_2\|=1$.

Then for any $\mathbf{x}$, we have the following approximation to $G(\mathbf{x})$,
\begin{equation}
G(\mathbf{x})\approx\tilde{G}(\mathbf{x})=\E\left[G|\tilde{\mathbf{W}}_1^{T}\mathbf{x}\right]=\E\left[G|\tilde{\mathbf{u}}\right].
\end{equation}
Letting $\tilde{\mathbf{v}}=\tilde{\mathbf{W}}_2^{T}\mathbf{x}$ and following the same procedure as \Cref{prop1}, for any fixed $\tilde{\mathbf{u}}$ we have,
\begin{equation}
\|G-\tilde{G}\|_{L_2}^2\leq C_1\|\nabla_{\tilde{\mathbf{v}}}G\|_{L_2}^2,
\end{equation}
where $C_1$ is some constant. Since $\nabla_{\tilde{\mathbf{v}}}G = \mathbf{W}_2^{T}\tilde{\mathbf{W}}_2\nabla_{\mathbf{v}} G+\mathbf{W}_1^{T}\tilde{\mathbf{W}}_2\nabla_{\mathbf{u}} G$, we have
\begin{equation}
\|G-\tilde{G}\|_{L_2}^2\leq C_1\|\nabla_{\tilde{\mathbf{v}}}G\|_{L_2}^2
\leq C_1\left(\|\mathbf{W}_2^{T}\tilde{\mathbf{W}}_2\nabla_{\mathbf{v}} G\|_{L^2}+\|\mathbf{W}_1^{T}\tilde{\mathbf{W}}_2\nabla_{\mathbf{u}} G\|_{L_2}\right)^2.
\end{equation}
The result holds by plugging the respective terms.
\end{proof}

The approximation procedure generates an ``innovative simulator'' $\tilde{f}$ on the reduced input space of $\mathbf{U}=\tilde{\mathbf{W}}_1^{T}\mathbf{X}$, which is however not deterministic. Suppose there are two distinct inputs $\mathbf{X}_1$ and $\mathbf{X}_2$ with the respective outputs $Y_1\neq Y_2$. It may happen that $\tilde{\mathbf{W}}_1^{T}\mathbf{X}_1=\tilde{\mathbf{W}}_1^{T}\mathbf{X}_2$, i.e. the approximated simulator $\tilde{f}$ may yield different outputs given the same input. The low-dimensional stochastic simulator $\tilde{f}$ can nevertheless be emulated, for example using GP with nugget effect, assuming that the influence of the dropped components is relatively small and simple enough to be captured by the nugget. The overall approximate error of the final emulator $\hat{f}$ to $f$ can be decomposed into $\|\hat{f}-f\|\leq\|f-\tilde{f}\|+\|\tilde{f}-\hat{f}\|$, where the first term in the right hand side is due to the low-dimensional approximation which has been investigated in \cref{prop2}, and the second term depends on the emulation procedure. 

\subsection{Choice of parameters and structural dimension}
When applying the proposed framework for emulation, several parameters need to be specified properly, e.g. the parameters in the kernels and the regularization parameter $\epsilon_n$. The cross validation approach can be used to tune such parameters as in many nonparametric statistical methods. In addition, it is also required to choose an appropriate structural dimension $d$ to construct an accurate emulator. 

One of the possible ways is to choose $d$ within the dimension reduction procedure. Fukumizu and Leng \cite{fukumizu2014gradient} pointed out that it might not be practical to select $d$ based on asymptotic analysis of some test statistics, as in many existing dimension reduction techniques, when the dimension is high and the sample size is small. They mentioned that the ratio of the sum of the largest $d$ eigenvalues over the sum of all the eigenvalues, $\sum_{i=1}^d\lambda_i/\sum_{i=1}^m\lambda_i$, might be useful in identifying the conditional independence of $Y$ and $\mathbf{X}$ given $\mathbf{B}^{T}\mathbf{X}$. In addition, \cref{prop2} shows that the approximation error decreases as a function of $\lambda_d-\lambda_{d+1}$. As discussed in \cite{constantine2015computing}, $d$ might be chosen such that $\lambda_d-\lambda_{d+1}$ is maximized. However, we may notice that the approximation error also depends on the squares of the eigenvalues with some unknown weights. Therefore it seems to be not very practical to select $d$ solely based on the eigenvalues.

On the other hand, Fukumizu and Leng \cite{fukumizu2014gradient} suggested to select $d$ based on the subsequent utilization of $d$ rather than the dimension reduction procedure when dimension reduction serves as a pre-processing step. For example the ultimate goal of our proposed framework here is to construct an accurate emulator, hence it is intuitive to select the structural dimension that produces the best predictive performance. Therefore, in the following numerical studies, we select $d$ as well as other parameters for the gKDR approach using simple trial-and-error or more formal cross validation approach based on the predictive accuracy of the respective emulators.

\section{Numerical simulations}
\label{sec:sim}
In this section we conduct two numerical studies. In the first study, the proposed emulation framework using the gKDR approach is compared with several alternatives of dimension reduction and the full emulation on a PDE problem. This problem set up allows the computation of gradients explicitly. In the second study, we illustrate the emulation framework with an application to tsunami modeling; we also provide a comparison to other methods, except AS which cannot be applied. Throughout the simulations, the GPML code using maximum likelihood method implemented by \cite{rasmussen2006gaussian} is employed for the emulation assuming a linear form mean function with intercept, and a squared exponential correlation function.

\subsection{Study 1: elliptic PDE with explicit gradients available}
\label{subsec:study1}

In this example, we investigate the elliptic PDE problem with random coefficients as studied in \cite{constantine2014active}. Let $u=u(\mathbf{s},\mathbf{x})$ satisfy the linear elliptic PDE \begin{equation*}-\nabla_{\mathbf{s}}\cdot(a\nabla_{\mathbf{s}}u)=1,~~\mathbf{s}\in [0,1]^2.\end{equation*}
The homogeneous Dirichlet boundary conditions are set on the left, top and bottom boundary (denoted by $\Gamma_1$) of the spatial domain of $\mathbf{s}$, and a homogeneous Neumann boundary condition is imposed on the right side of the spatial domain denoted $\Gamma_2$. The coefficients $a=a(\mathbf{s},\mathbf{x})$ are modeled by a truncated Karhunen-Loeve (KL) type expansion
\begin{equation}
\log(a(\mathbf{s},\mathbf{x}))=\sum_{i=1}^m x_i\gamma_i\phi_i(\mathbf{s}),
\end{equation}
where the $x_i$ are i.i.d. standard Normal random variables, and ${\phi_i(\mathbf{s}),\gamma_i}$ are the eigenpairs of the correlation operator
\begin{equation}
C(\mathbf{s},\mathbf{t})=\exp(\beta^{-1}\|\mathbf{s}-\mathbf{t}\|_{1}).
\end{equation}

The target value is a linear function of the solution
\begin{equation}
f(\mathbf{x})=\int_{\Gamma_2}u(\mathbf{s},\mathbf{x})/|\Gamma_2|d\mathbf{s}.
\end{equation}
The problem is discretized using a finite element method on a triangulation mesh, then $f$ and $\nabla_{\mathbf{x}}f$ can be computed as a forward and adjoint 
problem; see \cite{constantine2014active} for more details. We choose $m=100$ and examine two cases of the correlation lengths $\beta=1$ or $\beta=0.01$. Therefore, the original input space is $\mathcal{X}=\mathbb{R}^{100}$ with standard Normal distribution and the output $f(\mathbf{x})$ is univariate.

The gKDR approach is applied to reduce the dimension of the problem using $M$ samples. We also compare with several popular alternative dimension reduction techniques: AS (here possible due to the explicit gradients), SIR, SIR-II \cite{li1991sliced}, sliced average variance estimation (SAVE) \cite{cook1991discussion}, MAVE and PFC. After reducing the dimension of the problem, the GP emulator is trained using a Latin Hypercube design of $10d$ points on the reduced $d$-dimensional space so that the whole procedure needs $M+10d$ samples in total using each dimension reduction method. For comparison, we also emulate the problem on the original $100$-dimensional input space directly with $M+10d$ samples, which is the full emulation. The gKDR approach is implemented in Matlab by K. Fukumizu, see \url{http://www.ism.ac.jp/\string~fukumizu/}. The Matlab code for AS and solving the PDE by \cite{constantine2014active} is available on \url{https://bitbucket.org/paulcon/active-subspace-methods-in-theory-and-practice}. For SIR, SAVE and PFC, the codes are provided in the Matlab LDR-package (\url{https://sites.google.com/site/lilianaforzani/ldr-package}), and SIR-II is implemented by simply modifying the SIR code. For MAVE, the Matlab code by Y. Xia is available from \url{http://www.stat.nus.edu.sg/\string~staxyc/}. The associated parameters in some methods, such as the kernel and regularization parameters for gKDR, the number of slices for the sliced methods and the degree of polynomial basis for PFC, are chosen in a simple trial-and-error way by trying several values and selecting the best.

The final emulators are used to make prediction on a testing set of $n$ evaluations $\{f_1,...,f_n\}$ that differ from the training set, where $f_i=f(\mathbf{x}_i)$ and $\mathbf{x}_i\in\mathbb{R}^{100}$ is drawn randomly from the standard Normal distribution. The predictive performance is measured by the normalized predictive root-mean-square-error (PRMSE)
\begin{equation*}\Nprmse=\frac{\sqrt{\frac{1}{n}\sum_{i=1}^n (f_i-\hat{f}_i)^2}}{\max_i f_i - \min_i f_i},
\end{equation*}
where $\hat{f}_i$ is the prediction (predictive mean) using emulation. The associated computing time is also recorded with three parts: T1 for running the simulator, T2 for estimating the dimension reduction and T3 for training the emulator and making prediction. Note that T1 includes the time devoted to run the simulator $M$ times when using all the dimension reduction methods. It also includes the time used to compute the gradients for AS and the additional $10d$ runs for full emulation. T2 is zero for full emulation since there is no dimension reduction. T3 also includes the time for running the simulator $10d$ times on the designed points except full emulation.

In this study, we choose $M=300$ and $d=1,...,5$. \Cref{study1_table} presents the results on a testing set with $n=500$ evaluations using different emulation approaches and \cref{dr_emu_time_records} shows an example of the associated computing time when $\beta=1$ and $d=5$. Compared with the full emulation results, by reducing the dimension properly, the predictive accuracy can be improved, especially when the correlation length is long ($\beta=1$). Also, as a result, the computing time for training GP emulator (T3) decreases dramatically. In terms of predictive accuracy, AS naturally performs the best, as it is using exact gradients, followed by gKDR. MAVE, SIR and PFC are better than SIR-II and SAVE, but PFC does not work very well when $\beta=1$ and MAVE spends more computing time on dimension reduction. Most methods yield smaller errors for $\beta=1$ than $\beta=0.01$, except PFC and full emulation. Unlike the other techniques, AS employs exact gradients which explains its lead in performance. However, as shown in \cref{dr_emu_time_records}, the computing time T1 for AS is about two orders of magnitude longer than the others making the method most computationally expensive. Moreover, computing gradients $\nabla_{\mathbf{x}} f$ is sometimes impossible, e.g. for the tsunami simulation in the next study. This restricts the applicability of AS method to a few applications. To summarize, when the exact gradients are computable the proposed gKDR approach is able to produce comparable results (though not as good) as the AS method that uses exact gradients, and outperforms the other SDR methods in most cases. However, the computational cost of applying gKDR is much less than that employing AS. In fact, gKDR not only is able to find the SDR accurately and efficiently, but also can be applied in a wide range of scenarios where complicated variable types or very high dimensions are involved. The next application into tsunami simulation provides a snapshot of its wide capability when there are few applicable alternatives.

\begin{table}[htp!]
\centering
\caption{Study 1: Normalized PRMSEs at $500$ testing sites using emulation on the full input space (Full) or combined with different dimension reduction techniques.}
\begin{tabular}{|c|c|c|c|c|c|c|c|c|}\hline 
\multicolumn{9}{|c|}{$\beta=1$} \\ \hline
 d & gKDR & AS & SIR & SIR-II & SAVE & MAVE & PFC & Full \\ \hline
1 & $0.116$ & $0.126$ & $0.125$ & $0.153$ & $0.153$ & $0.126$ & $0.152$ & $0.097$ \\ \hline
2 & $0.044$ & $0.007$ & $0.025$ & $0.153$ & $0.153$ & $0.020$ & $0.140$ & $0.095$ \\ \hline
3 & $0.032$ & $0.011$ & $0.024$ & $0.152$ & $0.152$ & $0.019$ & $0.120$ & $0.095$ \\ \hline
4 & $0.024$ & $0.012$ & $0.024$ & $0.150$ & $0.150$ & $0.024$ & $0.080$ & $0.093$ \\ \hline
5 & $0.024$ & $0.011$ & $0.026$ & $0.150$ & $0.150$ & $0.083$ & $0.071$ & $0.092$ \\ \hline
\multicolumn{9}{|c|}{$\beta=0.01$} \\ \hline
1 & $0.037$ & $0.033$ & $0.043$ & $0.169$ & $0.169$ & $0.039$ & $0.161$ & $0.032$ \\ \hline
2 & $0.033$ & $0.028$ & $0.039$ & $0.169$ & $0.169$ & $0.038$ & $0.160$ & $0.032$ \\ \hline
3 & $0.033$ & $0.029$ & $0.039$ & $0.167$ & $0.167$ & $0.039$ & $0.034$ & $0.032$ \\ \hline
4 & $0.033$ & $0.025$ & $0.039$ & $0.167$ & $0.167$ & $0.039$ & $0.033$ & $0.032$ \\ \hline
5 & $0.033$ & $0.024$ & $0.038$ & $0.167$ & $0.167$ & $0.037$ & $0.033$ & $0.032$ \\ \hline
\end{tabular}
\label{study1_table}
\end{table}

\begin{figure}[htbp!]
\centering
\includegraphics[width=0.8\textwidth]{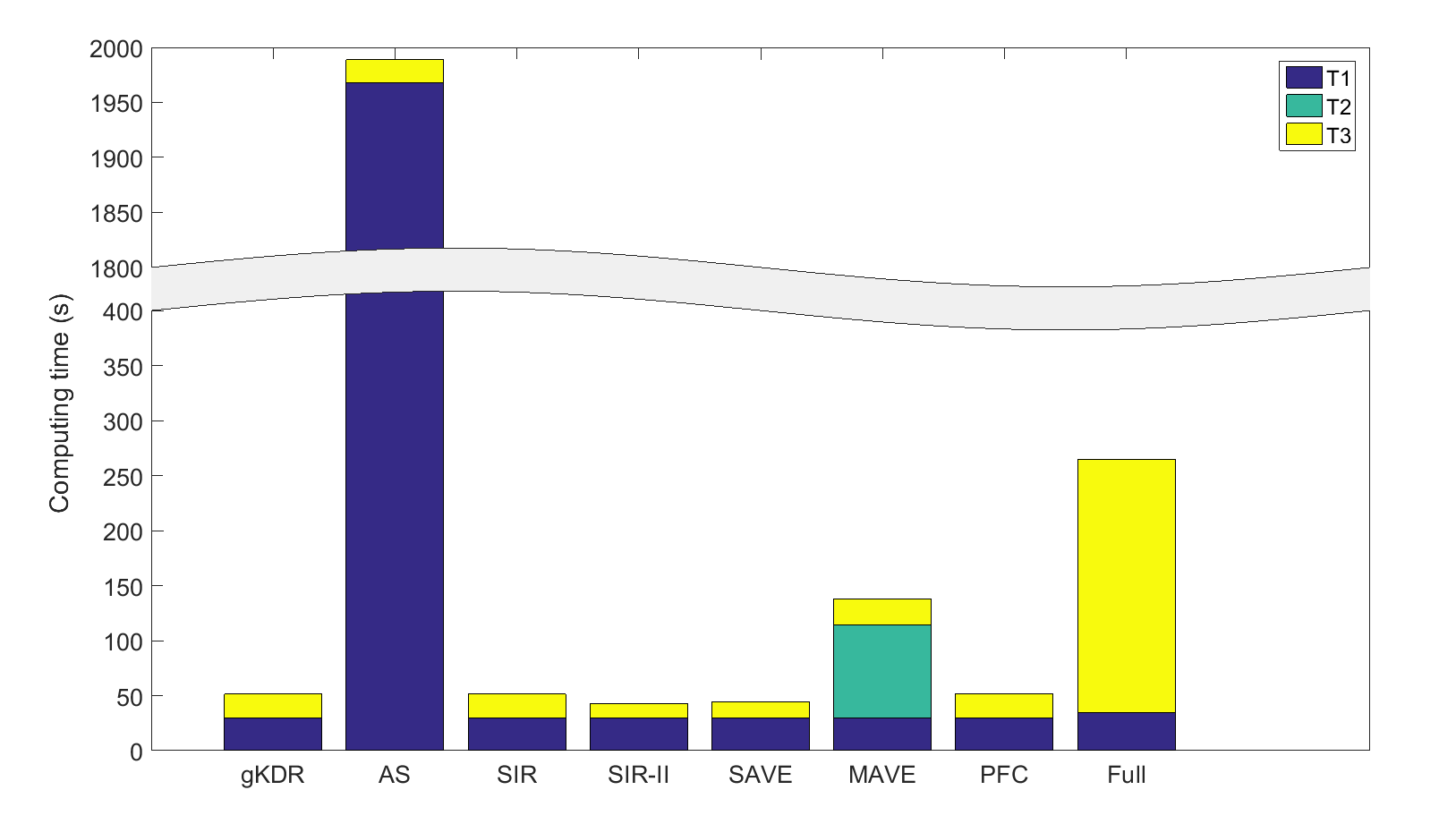}
\caption{Computing time (in seconds) for the emulation using different approaches when $\beta=1$ and $d=5$.}
\label{dr_emu_time_records}
\end{figure}

\subsection{Study 2: tsunami emulation where no gradients available}
\label{subsec:study2}
Here we apply the proposed general framework to investigate the impact of uncertainties in the bathymetry on tsunami modeling, where the bathymetry is included as a high-dimensional input.  

A synthetic bathymetry surface is created in the $(s_1,s_2)$ coordinate system to conduct tsunami simulations as shown in \cref{syn_bathy_seadeform_gauges} (a). For simplicity, we assume that the seabed elevation only vary along the first coordinate $s_1$. Though simple, it still captures the typical continental characteristics: the continental shelf spans from shore line ($s_1=0$) to around $s_1=-25$ km at the water depth of around $150$ m; the continental slope is between $s_1=-25$ km and $s_1=-75$ km with water depth of $150\sim1800$ m; west of $-75$ km it is the deep ocean with water depth of $1800\sim2200$ m .

\begin{figure}[htbp!]
\centering
\includegraphics[width=\textwidth]{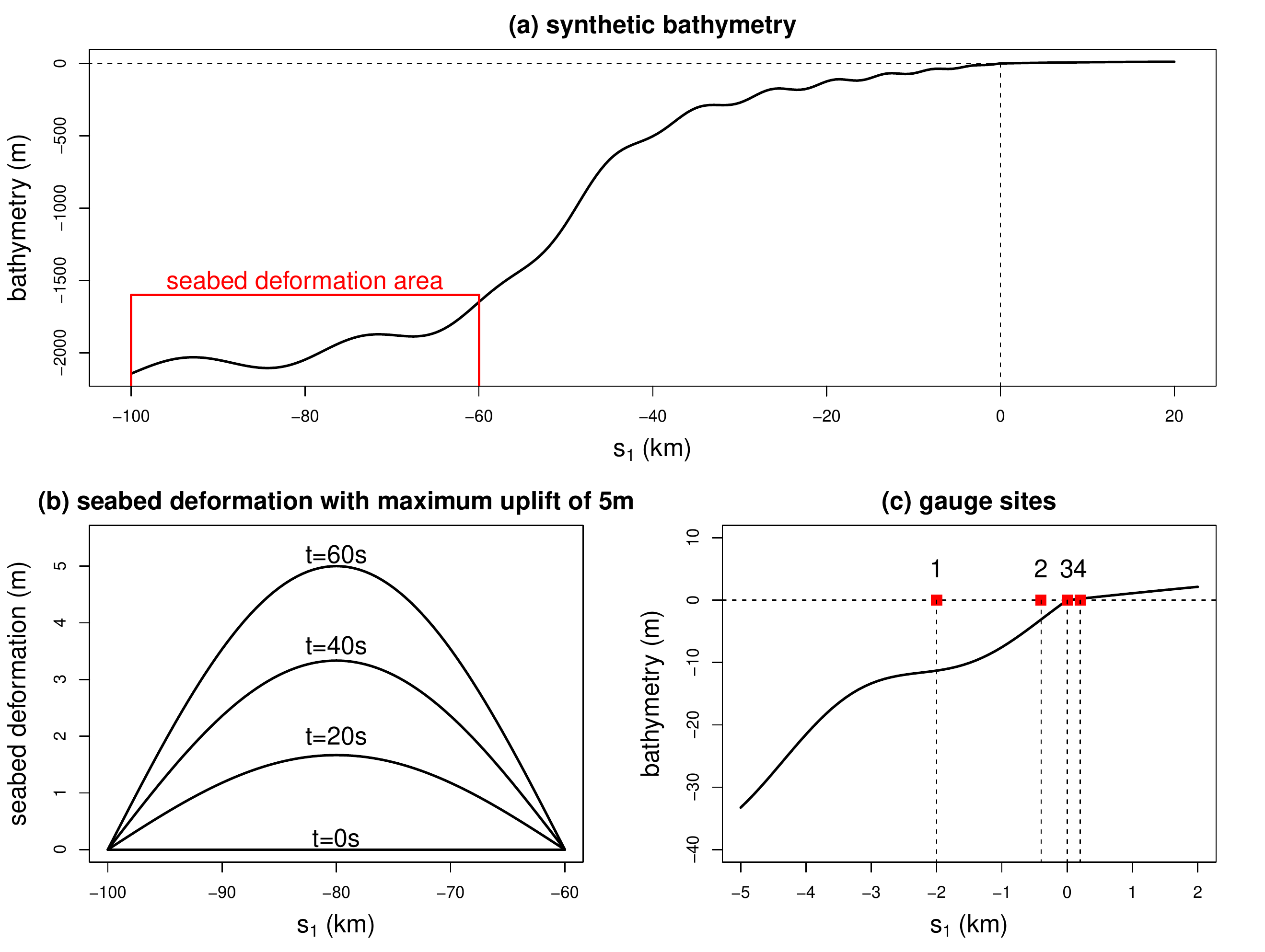}
\caption{(a) Synthetic bathymetry; (b) seabed uplift when $h_{max}=5$ m; (c) gauge sites.}
\label{syn_bathy_seadeform_gauges}
\end{figure}

To model uncertainties and mimic the realistic boat tracks of oceanic surveys, some irregular lines are drawn. We consider two levels of survey density which are denoted by survey level 1 and 2 respectively. Considering that the surveys are usually constrained within budgets, the total lengths of the two level surveys are fixed at $1000$ and $100$ km. To account for different possible survey traces, $20$ samples of boat tracks are drawn at each level of survey density; see in \cref{boat_track_samples} three samples per level for illustration. In this study, we only consider the impact of the uncertainties in the bathymetry within the area $(s_1,s_2)\in[-40000,0]\times[-5000,5000]$ as shown with a blue rectangle in \cref{boat_track_samples}. The bathymetry at other locations are fixed at the true values. This assumption is based on the physical knowledge that deep ocean has a relatively small influence on tsunami waves.

\begin{figure}[htbp!]
\centering
\includegraphics[width=\textwidth]{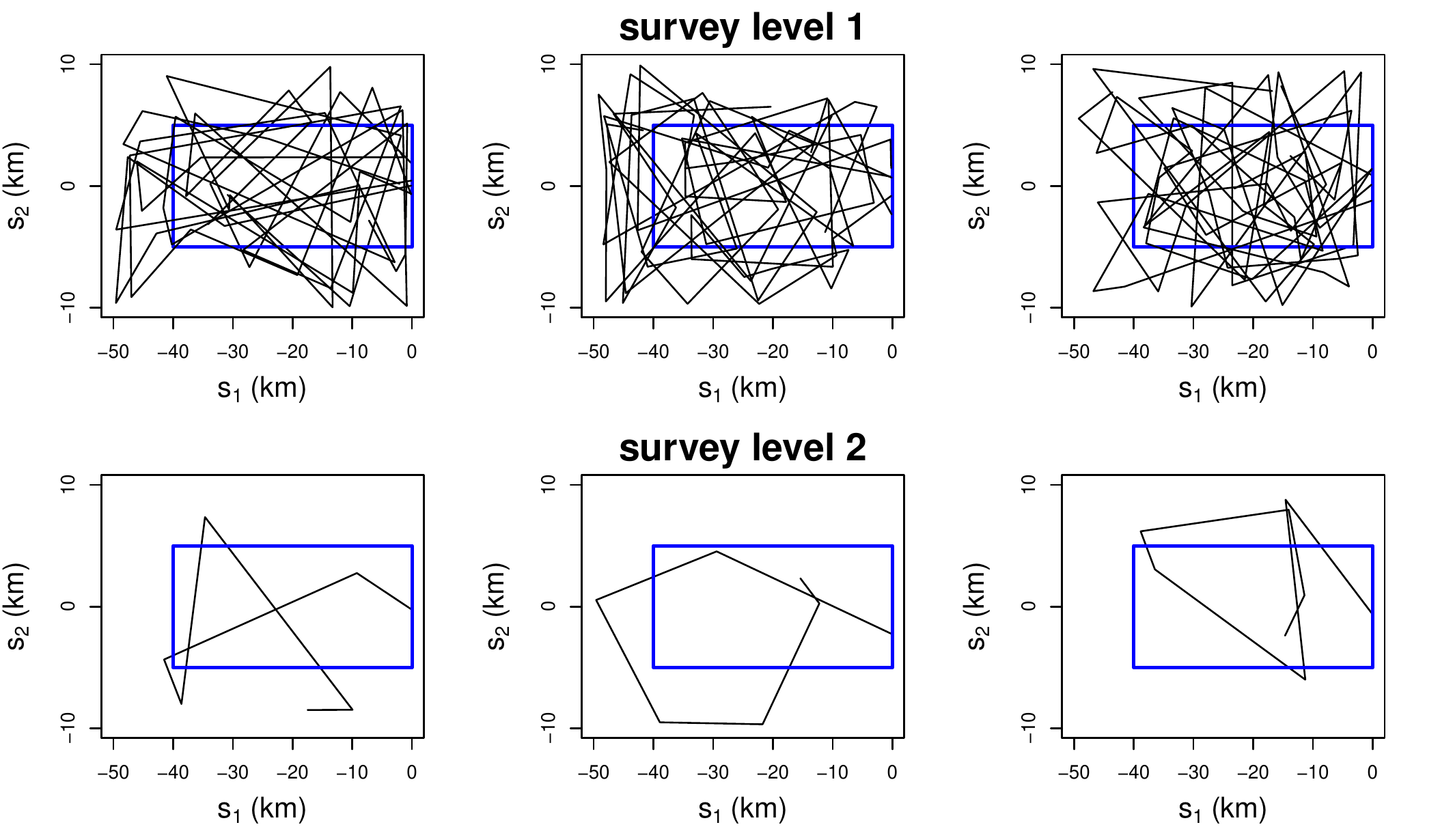}
\caption{Three samples of boat tracks at two levels of survey density; the bathymetry within the blue rectangle are assumed uncertain.}
\label{boat_track_samples}
\end{figure}

Along the possible boat tracks, observations of bathymetry are collected every $500$ m. Then the whole bathymetry surface can be modeled using the SPDE approach \cite{lindgren2011explicit} and inferred using INLA method\cite{rue2009approximate} for approximate Bayesian inference. Given observations of bathymetry $\mathbf{z}=(z_1,...,z_n)'$ at locations $\mathbf{s}=(\mathbf{s}_1,...,\mathbf{s}_n)'$, it is assumed that $z_i=Z(\mathbf{s}_i)+\epsilon_i$, $i=1,...,n$, where the unknown bathymetry surface $Z(\mathbf{s})$ is Gaussian field with Mat\'ern covariance function
\begin{equation}
\operatorname{Cov}(Z(\mathbf{s}),Z(\mathbf{s}^{*}))=\frac{\sigma^2}{2^{\nu-1}\Gamma(\nu)}(\kappa\|\mathbf{s}-\mathbf{s}^{*}\|)^\nu K_{\nu}(\kappa\|\mathbf{s}-\mathbf{s}^{*}\|),
\end{equation}
where $\|\mathbf{s}-\mathbf{s}^{*}\|$ is the Euclidean distance between two locations $\mathbf{s}$ and $\mathbf{s}^{*}\in\mathbb{R}^2$, $K_{\nu}$ is the modified Bessel function of the second kind and order $\nu > 0$, $\kappa > 0$ controls the nominal correlation range through $\rho=\sqrt{8\nu}/\kappa$ corresponding to correlations near $0.1$ at the Euclidean distance $\rho$, and $\sigma^2$ is the marginal variance. Lindgren et al. \cite{lindgren2011explicit} noted that $Z(\mathbf{s})$ also satisfies the stochastic partial differential equation (SPDE)
\begin{equation}
\tau(\kappa^2-\Delta)^{\alpha/2}Z(\mathbf{s})=W(s),
\end{equation}
where the innovation process $W$ is spatial Gaussian white noise with unit variance, $\Delta = \frac{\partial^2}{\partial s_1^2}+\frac{\partial^2}{\partial s_2^2}$ is the Laplacian operator, and $\tau$ controls the marginal variance through the relationship
\begin{equation}
\tau^2=\frac{\Gamma(\nu)}{\Gamma(\nu+1)(4\pi)\kappa^{2\nu}\sigma^2}.
\end{equation}
With a finite elements representation
\begin{equation}
Z(\mathbf{s})=\sum_{k=1}^m w_k\psi_k(\mathbf{s}),
\end{equation}
over an appropriate triangular mesh, a stochastic weak solution to the SPDE can be approximated. It is shown that the coefficients $\mathbf{w}=(w_1,...,w_m)^T$ can be approximated by a Gaussian Markov random field, i.e. $\mathbf{w}\sim \N(\mathbf{0},\mathbf{Q}^{-1})$ for $\mathbf{Q}$ is sparse. Note that bivariate splines could be used \cite{liu2015efficient} to reduce the number of parameters required for specific approximation order, which is good, but not enough, for dimension reduction. Then we build the following hierarchical spatial model, 
\begin{align*}
\begin{split}
\mathbf{z}|\mathbf{w},\boldsymbol{\theta} &\sim \N(\mathbf{Aw},\sigma_e^2\mathbf{I}),\\
\mathbf{w}|\boldsymbol{\theta} &\sim \N(\mathbf{0},\mathbf{Q}(\boldsymbol{\theta})^{-1}),\\
\boldsymbol{\theta} &\sim \pi(\boldsymbol{\theta}),
\end{split}
\end{align*}
where $\mathbf{A}_{ij}=\psi_j(\mathbf{s}_i)$ and $\boldsymbol{\theta}$ contains all the hyperparameters. Since $\mathbf{w}$ uniquely determines the bathymetry, it is the \textit{de facto} input for uncertain bathymetry. In this study, we build a mesh for the finite elements representation in the SPDE approach as shown in \cref{meshes} (a). The dense triangles in the middle cover the uncertain bathymetry area and the outer extension with coarse triangles is added to avoid boundary effect. There are $3200$ nodes that influence the bathymetry, hence the uncertain input for bathymetry is of dimension $3200$. Given each boat track and the associated observations $\mathbf{z}$, $20$ samples of the finite element coefficients are drawn from the posterior $\pi(\mathbf{w}|\mathbf{z})$ to construct a range of possible initial bathymetries. Thus there are $400$ ($20$ samples of boat tracks times $20$ samples of $\mathbf{w}$) sets of possible initial bathymetries in total at each survey level. \Cref{bathy_msd} shows the empirical sample mean and standard deviation of these $400$ possible bathymetries at survey 1 and 2.

\begin{figure}[htbp!]
\centering
\includegraphics[width=\textwidth]{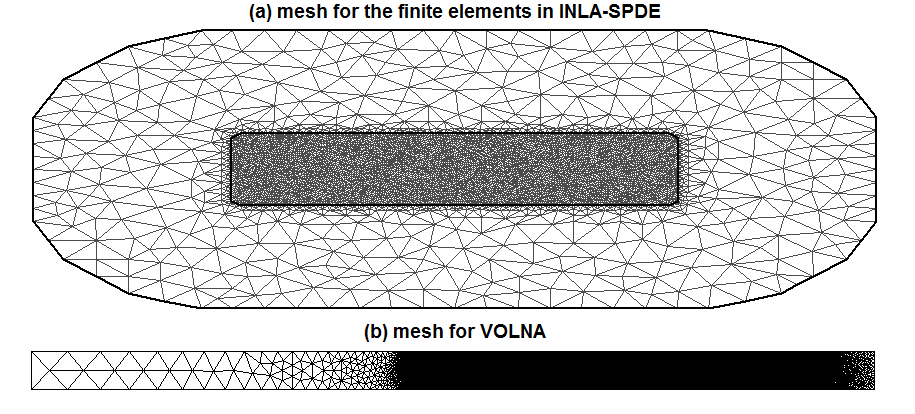}
\caption{(a) mesh for SPDE approach; (b) mesh for VOLNA.}
\label{meshes}
\end{figure}

\begin{figure}[htbp!]
\centering
\includegraphics[width=\textwidth]{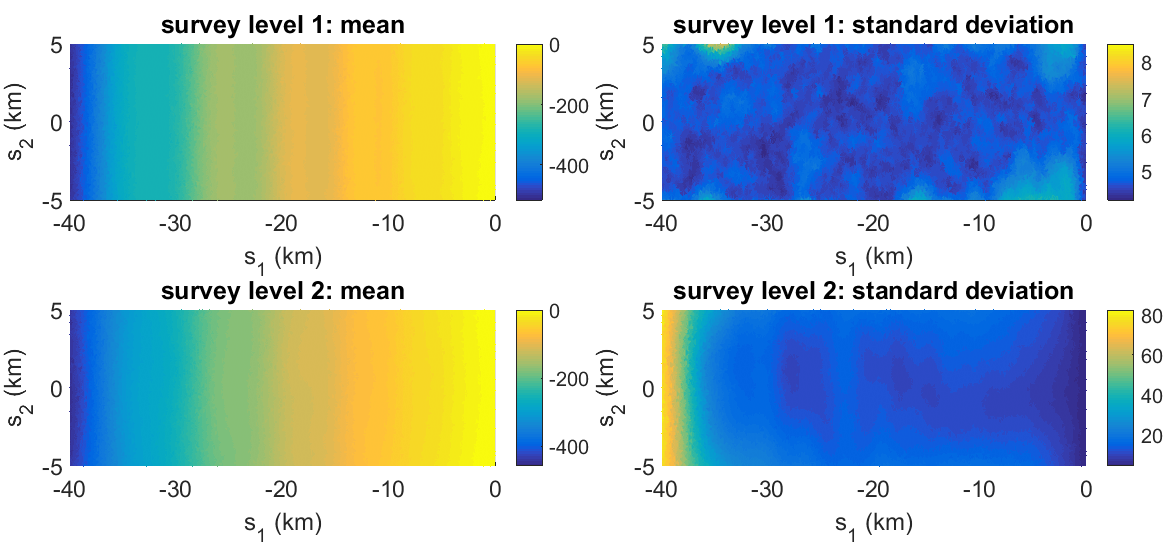}
\caption{Empirical mean and standard deviation of the $400$ sets of bathymetry input across both uncertain boat tracks and posterior samples of $\mathbf{w}$; note the different scales of standard deviation for survey level 1 and 2.}
\label{bathy_msd}
\end{figure}

Tsunami waves are triggered by the following simplified seabed deformation,
\begin{equation}dz(s_1,s_2;t)=\frac{t}{60}\cdot h_{max}\cdot \sin\left(\frac{s_1+100000}{-60000+100000}\pi\right)\cdot I_{\{-100000\leq s_1\leq -60000,0\leq t\leq 60\}},
\end{equation}
where $dz(s_1,s_2;t)$ is the seabed uplift at location $\mathbf{s}=(s_1,s_2)$ and time $t$, $h_{max}$ denotes the maximum seabed uplift; see \cref{syn_bathy_seadeform_gauges} (b) for example. We take $5$ different values $h_{max}=1,...,5$ m. These values are evenly combined with the uncertain initial bathymetry.  Thus there are two sources of uncertainties: $\mathbf{w}$ for bathymetry and $h_{max}$ for tsunami source, where $\mathbf{w}$ is high-dimensional.

We employ the tsunami code VOLNA \cite{dutykh2011volna}, an advanced non-linear shallow water equation solver using the finite volume method on a high performance computing facility. The computational domain and mesh for VOLNA are presented in \cref{meshes} (b). There are $120,661$ triangles and $61,068$ nodes in the mesh, where the coarse triangles in both ends are added to avoid boundary reflection. The output of the simulation is chosen to be $\Delta\eta(\mathbf{s})=\max\eta_t(\mathbf{s})-\eta_0(\mathbf{s})$, where $\eta_t(\mathbf{s})$ is the free surface elevation at simulation time $t$ and location $\mathbf{s}$. $\Delta\eta$ represents the maximum wave height at off shore locations or the maximum inundation depth at on shore locations. For illustration, we consider simulation values at gauge 1: $(-2000,0)$, gauge 2: $(-400,0)$, gauge 3: $(0,0)$ and gauge 4: $(200,0)$, which are at far shore, near 
shore, shore line and land respectively; see \cref{syn_bathy_seadeform_gauges} (c).

The simulation results are presented in \cref{gauges_eta_info}, with those using the true bathymetry shown in red lines and those using the sample mean bathymetry shown in green dash lines. We can see that $\Delta\eta$ increases with $h_{max}$ but also shows variation due to the uncertain inputs $\mathbf{w}$ for fixed $h_{max}$, especially at gauges 2-4 around the shore line. In general, the simulations with sample mean bathymetry would deviate from true values, while those with random bathymetry samples can cover the true events quite well. The survey level also has significant influence  that differs across the four gauges. In most cases the wider range of possible simulation values with coarser survey level 2 indicates that the uncertainty in the bathymetry would spread the tsunami waves out to simulate more extreme scenarios and such effect could be amplified around the shore line.

\begin{figure}[htbp!]
\centering
\includegraphics[width=\textwidth]{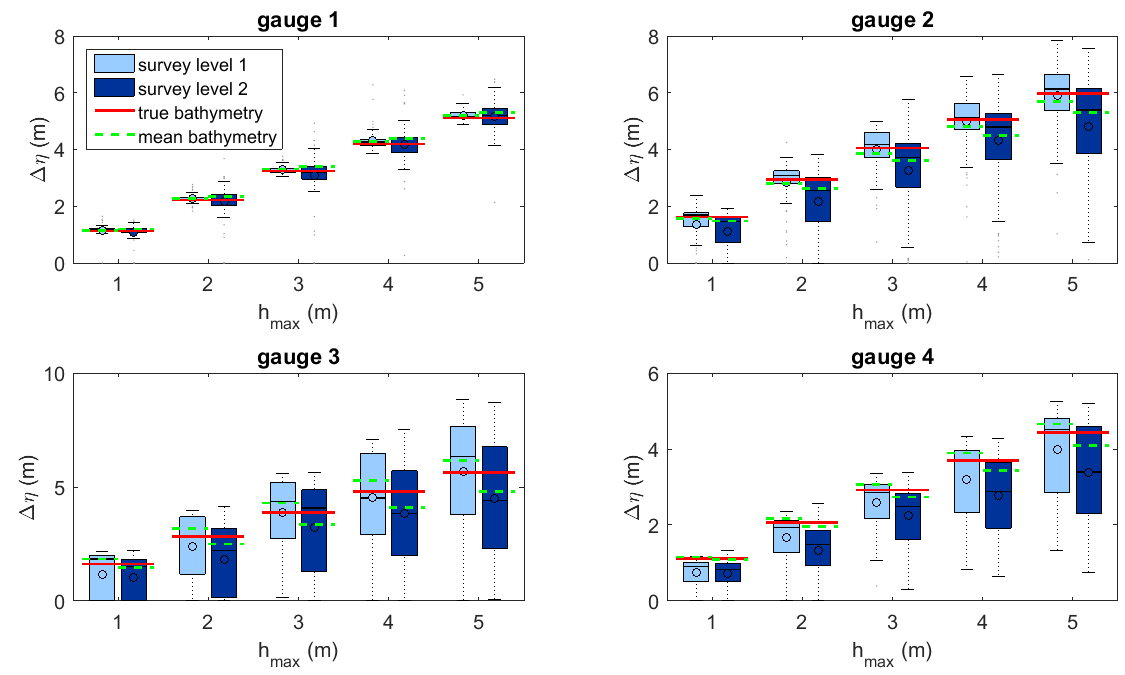}
\caption{Simulation values with different inputs $(\mathbf{w},h_{max})$ at four gauges. The uncertain input $\mathbf{w}$ are drawn based on survey level 1 and 2, together with the true values and sample mean values.}
\label{gauges_eta_info}
\end{figure}

Following the procedure in \cref{sec:model}, we can construct a low-dimensional emulator for such high dimensional simulator with $3200$ input parameters for the bathymetry ($\mathbf{w}$) and $1$ parameter for the seabed deformation ($h_{max}$). Denoting the VOLNA code with $f$, the output can be represented as $\Delta\eta=f(\mathbf{w},h_{max}).$
Because \cref{gauges_eta_info} displays a significant relationship between $h_{max}$ and $\Delta\eta$, we keep it as a separate input in the emulator and reduce the dimension of $\mathbf{w}$ only. In this case, we try to find a projection matrix $\mathbf{B}$ such that $\mathbf{w}\perp (\Delta\eta,h_{max})|\mathbf{B}^{T}\mathbf{w}$. The conditional independence just implies the sufficiency of $\mathbf{B}^{T}\mathbf{w}$, i.e. $p(\Delta\eta|h_{max},\mathbf{w})=\tilde{p}(\Delta\eta|h_{max},\mathbf{B}^{T}\mathbf{w})$ \cite{yin2015sequential}. Therefore, $(\Delta\eta,h_{max})$ is regarded as a temporary output when applying gKDR to reduce the dimension of $\mathbf{w}$.

For the gKDR approach, Gaussian RBF kernels $k(x,y)=\exp(\|x-y\|^2/(2\sigma^2))$ are deployed for both $k_{\mathcal{X}}$ and $k_{\mathcal{Y}}$ but with different parameters $\sigma^2$. Following \cite{fukumizu2014gradient}, we have the parameterization $\sigma_{\mathcal{X}}=c_1\sigma_{med}(\mathbf{X})$, $\sigma_Y=c_2\sigma_{med}(\mathbf{Y})$ where $\sigma_{med}(\cdot)$ is the median of pairwise distance of the data. The regularization parameter $\epsilon_n$ is fixed at $10^{-5}$ because its influence is shown to be negligible after a few trials. There are three parameters to be specified properly: $c_1$, $c_2$, and $d$, the number of directions included in the emulator. We consider possible candidates $c_1,c_2\in [0.5,1,5,10,15,20]$, $d\in[1,...,5]$ here. Then, based on each of the possible parameters combination, we can construct a GP emulator on the low-dimensional inputs $(h_{max},\mathbf{B}^{T}\mathbf{w})$ and make predictions on the new inputs $(\tilde{h}_{max},\mathbf{B}^{T}\tilde{\mathbf{w}})$. 

For comparison, we also apply alternative dimension reduction techniques to construct the low-dimensional approximations. Due to the complexity of the VOLNA code, the gradients of simulation values with respect to the inputs are not computable. Hence the active subspace method cannot be employed. Most of the methods in Study 1 cannot be applied directly because of the need for partial dimension reduction, or the ``large $p$, small $n$'' feature, i.e. there are much more input parameters than the number of simulations. We consider two extensions to PFC and SIR. The partial PFC (PPFC) method  \cite{kim2011partial} is implemented based on the R package ldr \cite{adragni2014ldr} to find the reduction on $\mathbf{w}$ only meanwhile taking the effect of $h_{max}$ into account. Note that PFC is not developed for the problem where $p>n$ or $p\gg n$. Another method we compare to is the sequential sufficient dimension reduction (SSDR) \cite{yin2015sequential}. It is specifically proposed to overcome the ``large $p$, small $n$'' difficulty by decomposing the variables into pieces each of which has $p_1<n$ variables so that conventional dimension reduction methods can be applied. The projective resampling approach \cite{li2008projective} with SIR is employed. The R code for SSDR by  \cite{yin2015sequential} is available from \url{http://wileyonlinelibrary.com/journal/rss-datasets}.

To measure the predictive performance and select proper parameters, a $10$-fold cross validation approach is employed. For each survey level, $400$ simulations are divided evenly into $10$ groups. Each group is retained as testing set once, while the other nine groups are used to estimate the projection matrix $\mathbf{B}$ using the gKDR, PPFC or SSDR approaches and train the respective GP emulator. \Cref{10foldcv_table} presents the normalized PRMSEs from the cross validation for each survey level and gauge using different dimension reduction techniques. The errors of survey level 1 are in general smaller than those of survey level 2. This implies that as the uncertainties in the bathymetry increase, it gets more difficult to make accurate predictions using emulation. The methods gKDR and SSDR outperform PPFC in all cases, especially in survey level 1 where the normalized PRMSEs can be $50\%$ lower in some cases. In survey level 1, the errors of the gKDR approach are slightly larger than those of the SSDR for gauge 2-4 where the normalized PRMSEs using SSDR are around $1.1\%\sim 3.7\%$ lower. But in survey level 2, the gKDR approach is more accurate than the SSDR approach for all gauges with reduction of normalized PRMSEs at $1.0\%$ for gauge 1 and $10.1\%\sim 18.9\%$ for gauge 2-4. Therefore, gKDR is comparable with SSDR in survey level 1 but works much better than SSDR in survey level 2 when there are more uncertainties involved. We can conclude that the proposed GP emulation framework combined with the gKDR dimension reduction approach is effective and accurate for this complicated tsunami simulator and overall it outperforms the alternatives.

\begin{table}[htbp]
\centering
\caption{Study 2: Normalized PRMSEs of the 10-fold cross validation using GP emulation combined with different dimension reduction methods.}
\begin{tabular}{|c|c|c|c|c|c|c|}\hline
\multirow{2}{*}{gauge}  & \multicolumn{3}{|c|}{survey level 1} &  \multicolumn{3}{|c|}{survey level 2} \\ \cline{2-7}
 & gKDR & PPFC & SSDR & gKDR & PPFC & SSDR \\ \hline
1  & $0.031$ & $0.078$ & $0.033$ & $0.095$ & $0.096$ & $0.096$ \\ \hline
2  & $0.099$ & $0.138$ & $0.097$ & $0.134$ & $0.175$ & $0.149$ \\ \hline
3  & $0.091$ & $0.187$ & $0.090$ & $0.129$ & $0.210$ & $0.159$ \\ \hline
4  & $0.082$ & $0.144$ & $0.079$ & $0.106$ & $0.141$ & $0.121$ \\ \hline
\end{tabular}\\
\label{10foldcv_table}
\end{table}

To investigate the impact of the training set size on the predictive performance of the proposed emulation framework with gKDR, we conduct repeated random sub-sampling cross validations with various training set sizes. We consider training set size as $2\%$, $5\%$, $10\%$, $20\%$, ..., $90\%$ and test set size as $10\%$ of the total $400$ simulations. For each training set size, the sampling is repeated $50$ times. The parameters $c_1,c_2,d$ are fixed at those values selected through the above 10-fold cross validation. \Cref{rrsscv_res} displays the normalized PRMSEs with various training set sizes. In general, the predictive errors decrease as the training set size increases, and eventually converge to a relatively flat level, after about $100$ simulations. It is reassuring that such a small number of simulations is enough to allow an efficient and effective dimension reduction and Gaussian process emulation.

\begin{figure}[htbp]
\centering
\includegraphics[width=\textwidth]{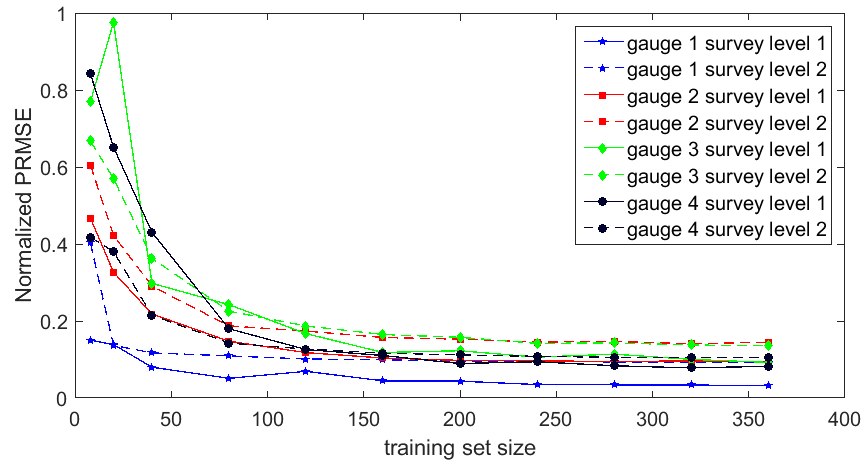}
\caption{Normalized PRMSEs with various training set sizes.}
\label{rrsscv_res}
\end{figure}

In the end, we apply the resulting emulator to predict the simulation values over a large number of new inputs. The predictions can be used for probabilistic risk assessment and many other purposes. For illustration, $10,000$ samples of $(\tilde{h}_{max},\tilde{\mathbf{w}})$ are drawn where $\tilde{h}_{max}$ are drawn from a Normal distribution $\N(3,1)$ truncated at $0$ and $5$. For each survey level, $100$ samples of possible boat tracks are drawn randomly. Given the observations along each boat track, $100$ samples of $\tilde{\mathbf{w}}$ are drawn from the posterior. In most of the current tsunami research work, the bathymetry is usually considered as fixed, which would neglect the possible uncertainty in the outputs. For comparison, we also conduct another set of simulations with the $5$ possible values of $h_{max}$, but with a fixed $\mathbf{w}$ taken to be the sample mean shown in \cref{bathy_msd}. In this case, $h_{max}$ is the only uncertain parameter. Then we can also make another set 
of predictions on the $10,000$ samples of $\tilde{h}_{max}$ only. The predictions for these two cases are presented in \cref{hist_pred}. We can see that at gauge 1 it makes no significant difference to include the uncertainty in the bathymetry or not, as the impact is relatively small on the far shore waves, as shown in \cref{gauges_eta_info}. However, the impact of the uncertainty in the bathymetry on the simulation values is more significant at gauges around the shore line. The distributions are shifted, skewed and spread out, covering more extreme events with larger $\Delta\eta$. These features are potentially important, for example in the catastrophe models for (re)insurance, or hazard assessment used in coastal planning.

\begin{figure}[htbp!]
\centering
\includegraphics[width=\textwidth]{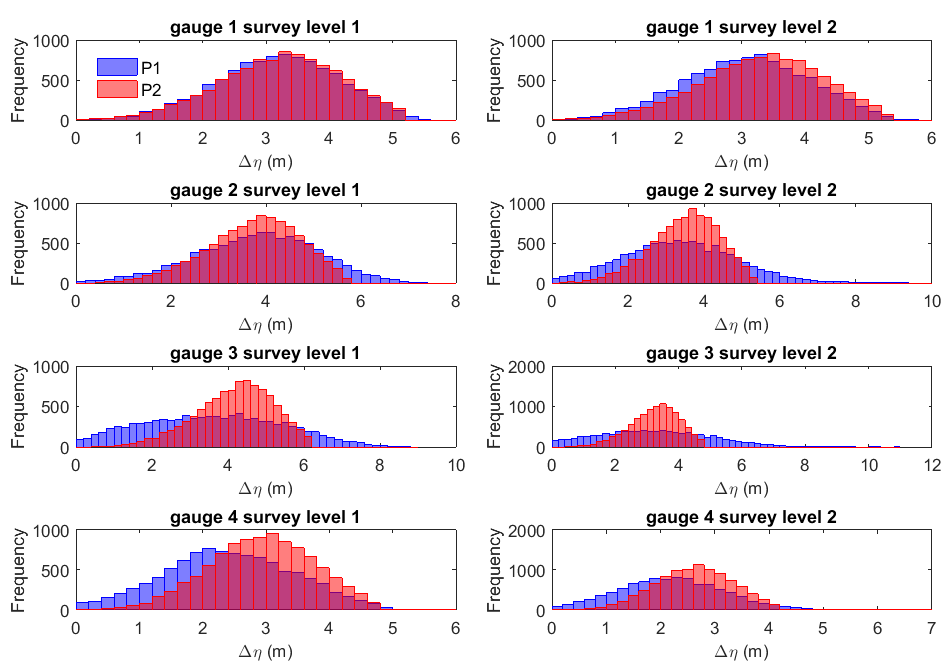}
\caption{Histogram of predictions of $10,000$ tsunami wave heights at four gauges due to the uncertain seabed uplift (prediction P2). Prediction P1 also accounts for the uncertainties in the bathymetry. Left column: high-resolution bathymetry survey (level 1); Right column: coarse bathymetry survey (level 2).}
\label{hist_pred}
\end{figure}

\section{Discussion}
\label{sec:dis}
We proposed a joint framework for emulation of high-dimensional simulators with dimension reduction. The gKDR approach is employed to construct low-dimensional approximations to the simulators. The approximations retain most of the information about the input-output behavior and make the emulation much more efficient. Both theoretical properties and numerical studies have demonstrated the efficiency and accuracy of the proposed approach and its advantages over other dimension reduction techniques. Our method can be applied for many purposes of uncertainty quantification such as risk assessment, sensitivity analysis and calibration, with great perspectives in real world applications.

There are some practical issues when applying the proposed framework. The hyperparameters in gKDR and the number of dimensions to be included in the emulator need to be specified properly. In practice, a simple trial and error procedure could be applied, especially when the results are not very sensitive to the choices. Cross validation steps could also benefit from parallel computing technique. The sample size also affect the predictive ability of the final emulator, as sufficient samples are needed to estimate the dimension reduction accurately. A diagnostic plot of predictive errors with increasing number of sizes such as \cref{rrsscv_res} could help identify the convergence. After determining the dimension reduction, a sufficient number of training samples with a proper design are often required to train the emulator in order to balance the computational cost and accuracy. The benefits of our approach are multiple. One can tackle uncertainty quantification tasks for complex models where boundary conditions are of high dimension. Beyond tsunami modeling, in climate simulation, weather forecasting or geophysical sciences, uncertainty quantification studies would become tractable and potentially offer solutions to important scientific problems.

\bibliographystyle{siam}
\bibliography{references_all}

\end{document}